\documentclass[10pt,conference,letterpaper]{IEEEtran}
\usepackage{amsmath}
\usepackage{amssymb}
\usepackage{amsthm}
\usepackage{amsfonts}
\usepackage{xcolor,xspace}
\usepackage{tikz}

\sloppy
\allowdisplaybreaks

\newtheorem{lemma}{Lemma}[section]
\newtheorem{theorem}[lemma]{Theorem}

\newcommand{\de}{d}

\newcommand{\eps}{\epsilon}
\newcommand{\N}{{\mathbb{N}}}

\newcounter{note}[section]

\title{Scheduling for Weighted Flow and Completion Times in Reconfigurable Networks}
\author{\IEEEauthorblockN{Michael Dinitz\IEEEauthorrefmark{1}\thanks{\IEEEauthorrefmark{1}Supported in part by NSF awards CCF-1909111 and CCF-1464239.}}
\IEEEauthorblockA{\textit{Department of Computer Science} \\
\textit{Johns Hopkins University}\\
Baltimore, MD, USA \\
mdinitz@cs.jhu.edu}
\and
\IEEEauthorblockN{Benjamin Moseley\IEEEauthorrefmark{2}\thanks{\IEEEauthorrefmark{1}NSF Grants CCF-1830711, CCF-1733873, CCF-1733873 and CCF-1845146, a Google Research Award, a Bosch junior faculty chair and an Infor faculty award.} }
\IEEEauthorblockA{\textit{Tepper School of Business} \\
\textit{Carnegie Mellon University}\\
Pittsburgh, PA, USA \\
moseleyb@andrew.cmu.edu}
}

\date{}

\IEEEoverridecommandlockouts

\begin{document}

\maketitle

\begin{abstract}
    New optical technologies offer the ability to reconfigure network topologies dynamically, rather than setting them once and for all.  This is true in both optical wide area networks (optical WANs) and in datacenters, despite the many differences between these two settings.  Because of these new technologies, there has been a surge of both practical and theoretical research on algorithms to take advantage of them.  In particular, Jia et al.~[INFOCOM '17] designed online scheduling algorithms for dynamically reconfigurable topologies for both the makespan and sum of completion times objectives.  In this paper, we work in the same setting but study an objective that is more meaningful in an online setting: the sum of flow times.  The flow time of a job is the total amount of time that it spends in the system, which may be considerably smaller than its completion time if it is released late.  We provide competitive algorithms for the online setting with speed augmentation, and also give a lower bound proving that speed augmentation is in fact necessary.  As a side effect of our techniques, we also improve and generalize the results of Jia et al.\ on completion times by giving an $O(1)$-competitive algorithm for arbitrary sizes and release times even when  nodes have different degree bounds, and moreover allow for the \emph{weighted} sum of completion times (or flow times).  
\end{abstract}

\begin{IEEEkeywords}
Scheduling, Reconfigurable Networks
\end{IEEEkeywords}

\section{Introduction} \label{sec:intro}

The ever-increasing demand for communication has resulted in unprecedented need for data transfer in essentially all settings, from local datacenters to planetary-scale WANs.  A central challenge for network operators is to accommodate as much traffic as possible and to finish data transfers as quickly as possible.  In order to make networks even more efficient, new technologies have been developed that allow for \emph{software-reconfigurable networks} (usually just called \emph{reconfigurable networks}).  These technologies essentially allow software control over the network topology, rather than just over traditional control problems such as routing, scheduling, congestion control, etc.  In other words, we are now able to dynamically reconfigure the network topology to respond to network demands in an online fashion.

There has been a significant amount of work on actually building these technologies and systems: see~\cite{owan} for such a system for optical WANs, and see \cite{projector,firefly,mirrormirror,flyways,helios} for a small sample of reconfigurable datacenter networks (a survey of reconfigurable datacenters can be found at~\cite{survey}).  However, there has been less attention paid to the algorithmic problems raised by these technologies: we have the ability to dynamically reconfigure the network topology, but what should we reconfigure it \emph{to}?  How should we react to changing transfer and traffic demands?  Most systems use a variety of heuristics, ranging from matching-based algorithms (maximum or stable) to simulated annealing.

The theoretical study of the algorithmic challenges arising from reconfigurable networks (particularly optical WANs) was recently initiated by Jia et al.~\cite{JJGDG17}, and this remains the state of the art on the theory of scheduling reconfigurable networks.  In their setting, they assume a centralized controller that can dynamically reconfigure the network topology, with the only restriction being a degree constraint at every node (which could be different for different nodes, depending on the underlying machine represented by the node)\footnote{Clearly this is not a fully realistic setting, as in optical WANs there are optical restrictions on the topology which need to be accounted for and in the datacenter setting there is still an underlying fixed network in addition to some reconfigurable links.  But as discussed in~\cite{JJGDG17}, it is a reasonable starting point for developing algorithms.}.  There is a stream of transfer requests arriving at the system, where each request has a source, a destination, a transfer size, and a release time (the earliest time by which the transfer can start).  The goal is to design a scheduling algorithm that decides, at each time slot, what topology to build and what jobs to transfer using that topology (under the additional restriction that multihop paths are not allowed).  They provided both offline and (more interestingly) online algorithm for the makespan objective (minimizing the time at which all transfers are finished) and the sum of completion times (minimizing the sum over all jobs of the time at which they finished).

We work in the same model, but extend and improve the results of~\cite{JJGDG17}.  Most importantly, we provide  online algorithms and prove their competitive ratio for a more natural objective function: the (weighted) sum of \emph{flow times}.  The flow time of a job (also sometimes called the \emph{sojourn time}, \emph{waiting time}, or \emph{response time}) is simply the time that it is in the system, i.e., its completion time minus its release time. If all release times are $0$, then flow times and completion times are the same.  But if jobs are released online, then not only are they extremely different, but moreover approximation guarantees on the completion times are not particularly meaningful.  While both problems have the same optimal solution, in an approximation analysis one can make a job wait proportional to its release date with little penalty.  When the time horizon is large, then  undesirable schedulers can have a small (e.g. constant) approximation ratio.

\medskip
\noindent \textbf{Completion Time Versus Flow Time.} To see the difference between completion time and flow time consider an extremely simple example, suppose that there are only two jobs, each of which has size $1$.  Job $1$ is released at time $1$, and job $2$ is released at time $1000$.  Then consider the schedule which which schedules job $1$ at time $999$ and job $2$ at time $1000$.  Clearly this is an undesirable schedule -- we should have scheduled job $1$ at time $1$ and job $2$ at time $1000$.  But if we look at the sum of completion times, the optimal solution has cost $1001$, while this horrible schedule has cost $1999$.  So this horrible schedule looks pretty good with respect to completion times, since it is a $2$-approximation!  This is clearly ridiculous; we ``cheated" by allowing job $1$ to have terrible performance but  it  balanced out with job $2$'s release date.  On the other hand, if we look at the sum of flow times, the horrible schedule has cost $1000$ (since job $1$ is in the system for $999$ time units while job $2$ is only in the system for $1$ time unit) while the optimal solution has cost $2$ (since jobs do not have to wait to be scheduled).  Thus, the flow time objective will rule out such a schedule and   accurately reflects the quality of a schedule. 

\medskip
\noindent \textbf{Results:} In this paper we initiate the study of reconfigurable network scheduling under the weighted flow time objective.  In more detail, we prove the following results.

\begin{itemize}
\item In the most general setting of~\cite{JJGDG17} (arbitrary degree constraints, arbitrary job sizes, arbitrary release dates), and in addition where every job has a weight which multiplies its flow time in the objective, we give an algorithm which is $O(1/\epsilon^2)$-competitive as long as the algorithm is allowed to have \emph{speed} $2+\epsilon$, for any $\epsilon > 0$.  Informally, this is a form of resource augmentation: we allow the algorithm to complete jobs at a rate that is $2+\epsilon$ faster than the optimal solution is allowed.  From a networking perspective, this is equivalent to allowing higher throughput edges as resource augmentation.  So, for example, our algorithm will have weighted flow time with $(2+\epsilon)100$ Gbps links that is only $O(1/\epsilon)$ times worse than the optimal solution with $100$ Gbps links.  This can also be thought of as overprovisioning: if we want performance that is comparable to the optimum but without knowing in advance what the jobs will look like, then we can just overprovision by a $2+\epsilon$ factor. 


\item We justify our previous results by showing that speed augmentation is necessary: we prove a \emph{polynomial} lower bound on any online algorithm without speed augmentation.  In particular, we prove that any online randomized algorithm without speed augmentation can have competitive ratio that is at best $\Omega(\sqrt{n})$.  This is a terrible lower bound, showing that without resource augmentation all algorithms perform poorly in the worst case.  In settings like this, resource augmentation has been used so theory can differentiate between the performance of algorithms \cite{PruhsST04}.

\item As a side effect of our techniques, we are also able to extend the results of~\cite{JJGDG17} on completion times to a more general setting.  While this work provided many algorithms and $O(1)$-competitive analyses, they did not give an $O(1)$-competitive algorithm for the most general case: general degree constraints, general job sizes, and general release times.  They also did not give bounds on \emph{weighted} completion times.  A simple modification of our flow time algorithm gives an $O(1)$ approximation \emph{without} speed augmentation for the completion time objective in the most general setting. 
\end{itemize}

\noindent \textbf{Outline.} In Section~\ref{sec:related} we describe related work for both reconfigurable networks and flow time scheduling in other settings.  In Section~\ref{sec:preliminaries} we formally describe the problem setting.  Section~\ref{sec:upper} has our main upper bounds.  We begin with a warm-up in Section~\ref{sec:simple} where we assume that all weights are $1$, all job sizes are $1$, and all degree bounds are the same.  This simplified setting allows us to demonstrate the intuition behind our more general techniques.  We then prove give our algorithm and analysis for the general setting in Section~\ref{sec:general}, and show how this can be modified to give a bound on completion times in Section~\ref{sec:completion}.  Finally, in Section~\ref{sec:lower-bound} we prove our lower bound implying that speed augmentation is necessary.

\section{Related Work} \label{sec:related}

\subsection{Reconfigurable Networks}
As discussed in the introduction, there has been a significant amount of work in the last decade on reconfigurable datacenters.  For overviews, see a recent tutorial from SIGMETRICS 2019~\cite{reconfigurable-tutorial} and the related survey on reconfigurable datacenters~\cite{survey}.  These have been enabled by a variety of technologies, including optical circuit switching~\cite{helios,cThrough}, 60GHz wireless~\cite{flyways}, and free space optics~\cite{firefly,projector}.  

From an algorithmic point of view, these systems generally use a variety of heuristics without provable guarantees.  The main line of work on understanding the theory behind reconfigurable datacenters is in the form of \emph{demand-aware networks}~\cite{DAN1,DAN2,DAN3,DAN4}.  In this setting, we assume that we are given a traffic matrix, and are trying to design a network topology which will have good performance on that traffic matrix (i.e., since the network is reconfigurable we can measure demand and then build an appropriate network topology).  Usually the notion of quality involves the (average) lengths of paths.  Scheduling problems are not considered in this setting.

For non-datacenter contexts, reconfigurable optical WANs were introduced by~\cite{owan}.  The scheduling algorithms used in~\cite{owan} were based on heuristics (simulated annealing in particular), so in followup work, Jia et al.~\cite{JJGDG17} introduced the theoretical study of scheduling algorithms for reconfigurable optical WANs.  They worked in a model which is not a perfect match for optical WANs, but is close enough to be useful.  We adopt this model, and extend~\cite{JJGDG17} to a better objective function and slightly more general setting.  Moreover, since their model ignores many of the real-world difficulties of optical WANs, it applies to more general reconfigurable networking settings.  

We note that while WANs and datacenters are obviously extremely different settings, our goal is to understand the scheduling problems that arise from the power of reconfiguration.  Hence we abstract out to a level which encompasses both of these settings, at the price of not being extremely realistic for either of them.  However, this is the level of abstraction used in~\cite{JJGDG17}, so it is perhaps a reasonable setting for optical WANs.  For datacenters, the main difference between our model and reality is the existence of an underlying fixed network: in our model we assume that the entire network is reconfigurable, while in most reconfigurable datacenter systems only a fraction of the links can be reconfigured.  Analyzing this combined setting is an interesting future line of research, which was recently initiated in the context of routing~\cite{FGS18,FPS19} but which is still entirely unexplored for scheduling.


\subsection{Flow Time Scheduling}

Optimizing total weighted flow time is the most popular objective in online scheduling theory.  We discuss related work on the problem of scheduling $n$ preemptive jobs that arrive over time on a single machine with the objective of optimizing the total weighted flow time.   For a (slightly dated) survey see~\cite{PruhsST04}, and further pointers to relevant work can be found in~\cite{ImMP11}.  It is folklore that the algorithm Shortest-Remaining-Processing-Time (SRPT) is optimal for scheduling unweighted jobs on a single machine.  When jobs have weights, it is known that no online algorithm can have a constant competitive ratio \cite{BansalC09}.  

When there are non-constant lower bounds the competitive ratio of any online algorithm, prior work has focused on a resource augmentation analysis.  A $s$-speed $c$-competitive algorithm is one where the algorithm achieves a competitive ratio of $c$ and the algorithm is given a machine that is a factor $s$ faster than the optimal solution.  The consensus in the community is that the best positive theoretical result one can show is an algorithm that is $(1+\epsilon)$-speed $f(\epsilon)$-competitive for any constant  $\epsilon >0$ where $f(\cdot)$ is a function only depending on $\epsilon$ \cite{KalyanasundaramP00}.   In particular, the competitive ratio is independent of $n$, e.g., $O(\frac{1}{\eps})$.  Such an algorithm is known as \emph{scalable}. Showing an algorithm is scalable gives strong evidence that the algorithm will work well in practice. 

 The most natural algorithm is highest-density-first when jobs have weights.  This algorithm prioritizes jobs in order of their weight over processing time.  This algorithm is known to be  $(1+\epsilon)$-speed $O(\frac{1}{\eps})$-competitive for total weighted flow time on a single machine \cite{BecchettiLMP06}.  The algorithm has been generalized to many environments \cite{AnandGK12}.
\section{Definitions and Preliminaries} \label{sec:preliminaries}

As discussed, we will be studying the same model as~\cite{JJGDG17}.  The main difference is the objective function.  

\paragraph{Model and Scheduling Definition} There is a set of nodes $V$, each representing a node in our network.  Each vertex $v \in V$ comes with a degree bound $d_v$.  A request (job) is a tuple $(u_i, v_i, \ell_i, r_i, w_i)$, where $u_i, v_i \in V$ are the source and destination respectively, $\ell_i \in \N$ is the size, $r_i \in \N$ is the release time, and $w_i \in \mathbb{R}$ is the weight.  Note that without loss of generality we assume sizes and release times are natural numbers, since we can always adjust the scale of a time slot.  In each round $t$, we can create a graph $G_t$ with vertex set $V$ which satisfies the degree constraints, and where each edge $\{u,v\} \in E(G_t)$ is labeled with a request $i$ such that $\{u,v\} = \{u_i, v_i\}$ and $t \geq r_i$.  The request is completed once it has appeared in at least $\ell_i$ of these graphs.  Note that as in~\cite{JJGDG17} we are allowing only direct links (we do not allow data to be transferred over longer paths) and allow preemption.  See~\cite{JJGDG17} for more justification of this model.   

\paragraph{Online vs Offline} Clearly scheduling problems in this context make sense both on- and offline.  We will be concerned with the competitive ratio (the worst case cost of the algorithm divided by the optimal solution) of scheduling in the online setting.  This the same as the approximation ratio, except we require the algorithm to be online.

\paragraph{Objective Function and Speed Augmentation} As discussed, Jia et al.~\cite{JJGDG17} considered two objective functions: the makespan and the sum of completion times. 
We will mostly be concerned with a different measure of quality: the weighted sum of \emph{flow times}.  The flow time of a request $i$ is the time $c(i)$ at which it completes minus its release time $r_i$.  That is, the flow time of a job is simply how long it is in the system before being completed.  This is a more natural objective than the sum of completion times, but is also more difficulty to optimize.  We will consider the objective of the weighted flow time, where our goal is to minimize $\sum_{i} w_i(c(i) - r_i)$.

Unfortunately, as we show in Section~\ref{sec:lower-bound}, it is not possible to provide $O(1)$-competitive algorithm for the total flow time, even when all weights and sizes are unit.  In the face of strong lower bounds we adopt the most popular form of analysis known as a resource augmentation analysis.  Here we give the algorithm extra speed.  An algorithm running with speed $s \geq 1$ is able to process jobs at a rate that is $s$ times faster than the optimal solution.  As discussed in Section~\ref{sec:intro}, this can be thought of as overprovisioning the network, and will allow us to design competitive algorithms for the flow time objective.  Moreover, as discussed in Section~\ref{sec:related}, this notion of speedup is relatively standard in the scheduling literature.

\section{Upper Bounds} \label{sec:upper}
In this section we give our algorithms and corresponding upper bound results.  We begin in Section~\ref{sec:simple} with a simple setting that serves to demonstrate most of the main ideas behind our algorithm and analysis.  In Section~\ref{sec:general} we move to the most general online setting to prove our main results.  

\subsection{Simple setting} \label{sec:simple}
We will begin with the simplest possible setting: when all degree bounds are equal to $1$, all job sizes are $1$, and all weights are $1$.  Note that, in particular, since all degree bounds are $1$ the set of jobs scheduled at any time form a matching.

\subsubsection{Algorithm}
At time $t$, let $G(t)$ be the (multi)graph of all jobs that are in the system at time $t$ (i.e., all requests with release times at most $t$ which have not already been completed).  Order the jobs by release time (breaking ties arbitrarily but consistently), and then construct a maximal matching $E_t$ using this ordering.  These are the jobs scheduled at time $t$.  For each job $i$, let $C_i$ be the completion time of job $i$ (the time at which it is scheduled by this algorithm).

\subsubsection{Analysis}
While the algorithm itself is simple and combinatorial, we will analyze it through an LP relaxation, and in particular through the technique of dual fitting.  Let $\mathcal S$ denote the set of all jobs.   Consider the following linear program.

\begin{alignat*}{3}
\min & \quad  \sum_{i \in \mathcal S} \sum_{t \geq r_i} (t-r_i) x_{i,t} \\
\text{s.t.} & \quad \sum_{t \geq r_i} x_{i,t} \geq 1& \quad\forall i \in \mathcal S \\
& \quad \sum_{i \in \mathcal S : |\{u_i, v_i\} \cap \{w\}| = 1} x_{i,t} \leq 1  & \quad \forall w \in V,\ \forall t \in \N \\
& \quad x_{i,t} \geq 0 & \quad \forall i \in \mathcal S,\ \forall t \in \N
\end{alignat*}

While technically this LP has infinite size (since we did not put an upper bound on $t$), it is easy to see that we can put an upper bound on $t$ of $n \cdot \max_{i \in \mathcal S} r_i$, so this LP has finite size.  It is easy to show that this is a feasible LP relaxation.

\begin{lemma}
    If there is a schedule with sum of flow times at most $F$, then there is a solution to the LP of cost at most $F$.
\end{lemma}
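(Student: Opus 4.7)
The plan is to do the standard ``integral schedule to LP solution'' conversion. Given any feasible schedule with total flow time at most $F$, I will construct an integral solution $x$ to the LP whose objective value equals the schedule's total flow time, so in particular is at most $F$.

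Concretely, since in this simple setting every job has size $1$, each job $i$ is scheduled in exactly one time slot in the given schedule; call that slot $c(i)$. I would set $x_{i,c(i)} = 1$ and $x_{i,t} = 0$ for all $t \neq c(i)$. Feasibility is then immediate to verify: the covering constraint $\sum_{t \geq r_i} x_{i,t} \geq 1$ holds because $c(i) \geq r_i$ (jobs can only be scheduled after release) and we placed a $1$ at that coordinate; the degree constraint at any vertex $w$ and time $t$ holds because the schedule itself respects degree bound $1$, which in this setting means the edges scheduled in any round form a matching, so at most one job incident to $w$ has $c(i) = t$; and nonnegativity is trivial.

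For the objective, I would compute
\[
\sum_{i \in \mathcal{S}} \sum_{t \geq r_i} (t - r_i) x_{i,t} = \sum_{i \in \mathcal{S}} (c(i) - r_i),
\]
which is exactly the sum of flow times of the given schedule, and hence at most $F$.

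There is no real obstacle here; the lemma is just certifying that the LP is a valid relaxation. The only minor subtlety worth a sentence in the actual write-up is to note explicitly that the degree-$1$ assumption of the simple setting is what makes the second LP constraint hold integrally from the matching structure of the schedule, and that the finite truncation of $t$ mentioned after the LP does not affect feasibility since $c(i)$ is finite for every completed job.
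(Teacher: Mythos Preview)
Your proposal is correct and essentially identical to the paper's own proof: both set $x_{i,t}=1$ exactly when job $i$ is scheduled at time $t$ in the given feasible schedule, verify the covering and degree constraints from the fact that every job is scheduled after its release and that each $E_t$ is a matching, and observe that the LP objective then equals the schedule's total flow time. Your additional remarks about the degree-$1$ assumption and the finite truncation of $t$ are accurate but not needed for the argument.
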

\begin{proof}
Consider a schedule $\{E_t\}_{t \in \mathbb{N}}$ with sum of flow times $F$.  Since this is a feasible schedule, each $E_t$ is a matching.  We create an LP solution as follows: if job $i$ is scheduled at time $t$, then we set $x_{i,t} = 1$, otherwise we set $x_{i,t} = 0$.  Since the original schedule is feasible, every job is scheduled in some $t$ so the first LP constraint is satisfied, and similarly since each $E_t$ is a matching the second LP constraint is satisfied.  Thus this is a feasible LP solution.  By the definition of the $x$'s, the flow time in the schedule is precisely $\sum_{t \geq r_i} (t - r_i) x_{i,t}$, and thus the LP objective is the sum of the flow times, $F$.
\end{proof}

The dual of this LP is the following.

\begin{alignat*}{3}
\max & \quad  \sum_{i \in \mathcal S} \alpha_i - \sum_{u \in V} \sum_{t \in \N} \beta_{u,t} \\
\text{s.t.} & \quad \alpha_i - \beta_{u_i, t} - \beta_{v_i, t} \leq t - r_i& \quad\forall i \in \mathcal S,\ \forall t \in \N \\
& \quad \alpha_i \geq 0& \quad \forall u \in \mathcal S \\
& \quad \beta_{i,t} \geq 0& \quad \forall i \in \mathcal S,\ \forall t \in \N
\end{alignat*}

We will analyze our algorithm by finding a feasible dual solution and relating this to the cost of the algorithm.  However, due to the lower bound in Section~\ref{sec:lower-bound}, we will need to allow \emph{resource augmentation}.  Let $ALG(s)$ denote the total flow time of the algorithm when run with speedup $s$, i.e., when the algorithm processes jobs at a speed of $s$. 

Let's now define our dual solution.  But first we need a little bit of notation: for every node $v \in V$ and time $t$, let $d_v(t)$ denote the degree of $v$ in $G(t)$.  Then for every $i \in \mathcal S$, we let $\alpha_i = \frac{d_{u_i}(r_i) + d_{v_i}(r_i)}{2s}$.  Similarly, we will set $\beta_{u,t} = d_u(t) / (2s)$.

We first show that this is a feasible dual solution.

\begin{lemma} \label{lem:feasible}
    $\alpha_i - \beta_{u_i, t} - \beta_{v_i, t} \leq t - r_i$ for all $i \in \mathcal S$ and $t \geq r_i$.
\end{lemma}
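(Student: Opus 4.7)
The plan is to translate the dual-feasibility constraint into a statement about how quickly the degrees of $u_i$ and $v_i$ in the pending-job multigraph can decrease, and then to bound this rate using the speedup.

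First, I would just plug in the definitions. Substituting $\alpha_i = (d_{u_i}(r_i) + d_{v_i}(r_i))/(2s)$ and $\beta_{u,t} = d_u(t)/(2s)$ into the target inequality and multiplying through by $2s$ reduces the claim to
\[
[d_{u_i}(r_i) - d_{u_i}(t)] + [d_{v_i}(r_i) - d_{v_i}(t)] \;\leq\; 2s(t - r_i).
\]
So it suffices to bound, for each endpoint separately, how much its degree in the pending-job graph can fall between time $r_i$ and time $t$.

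Second, I would establish the one-vertex bound: for any vertex $u \in V$ and any $t \geq r_i$,
\[
d_u(r_i) - d_u(t) \;\leq\; s(t - r_i).
\]
The argument is that in the simple setting every degree bound equals $1$, so any feasible graph the algorithm schedules in one of its rounds uses $u$ on at most one edge. Speedup $s$ means the algorithm performs $s$ such rounds per time unit, so at most $s$ pending jobs incident to $u$ can be completed during a single time step. Hence $d_u$ can decrease by at most $s$ per time unit. Newly arriving jobs incident to $u$ only increase $d_u$ between $r_i$ and $t$, so they only strengthen this inequality. Applying this bound to $u = u_i$ and $u = v_i$ and summing yields the displayed inequality above, which is equivalent to the lemma.

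There is essentially no obstacle here beyond pinning down the correct semantics of "speed $s$" in this discrete matching model: because a single round already enforces $d_v \le 1$ per vertex, the speedup must be interpreted as allowing $s$ rounds per time unit, and this is precisely the quantity that caps the per-time-unit degree drop. Once that interpretation is fixed, the rest is a one-line arithmetic. The case $t = r_i$ is trivially $0 \leq 0$, and the inequality becomes only more slack as $t$ grows, so no boundary cases need separate treatment.
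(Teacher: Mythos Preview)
Your argument is correct and matches the paper's proof in substance: both reduce the constraint to the observation that $d_u(\cdot)$ can drop by at most $s$ per time unit at any vertex (since the degree bound is $1$ and speedup $s$ allows at most $s$ matchings per slot), with arrivals only helping. The paper phrases this as an induction on $t$ using $d_{u_i}(t) \geq d_{u_i}(t-1) - s$, whereas you unroll it into the single inequality $d_u(r_i) - d_u(t) \leq s(t - r_i)$; these are the same argument.
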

\begin{proof}
We prove this by induction on $t$.  For the base case, let $t = r_i$. Then
\begin{align*}
    \alpha_i - \beta_{u_i, t} - \beta_{v_i, t} &= \frac{d_{u_i}(r_i) + d_{v_i}(r_i)}{2s} - \frac{d_{u_i}(r_i)}{2s} - \frac{d_{v_i}(r_i)}{2s} \\
    &= 0 = t - r_i,
\end{align*}
as claimed.  Now consider some $t > r_i$.  Note that since we allow speedup $s$, the number of jobs scheduled at one time that have some fixed node as an endpoint is at most $s$ (rather than at most $1$).  Thus
\begin{align*}
    &\alpha_i - \beta_{u_i, t} - \beta_{v_i, t}= \alpha_i - \frac{d_{u_i}(t)}{2s} - \frac{d_{v_i}(t)}{2s}  \\
    &\leq \alpha_i - \frac{d_{u_i}(t-1) - s}{2s} - \frac{d_{v_i}(t-1) - s}{2s} \\
    &= \alpha_i - \beta_{u_i, t-1} + \frac12 - \beta_{v_i, t-1} + \frac12\\
    &\leq (t-1-r_i) + 1 = t-r_i. \qedhere
\end{align*}
\end{proof}

We will now prove two lemmas which will allow us to bound the cost of this dual solution.

\begin{lemma} \label{lem:alpha}
    $\sum_{i \in \mathcal S} \alpha_i \geq \frac12 \cdot ALG(s)$.
\end{lemma}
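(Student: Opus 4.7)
The plan is to reduce the lemma to the per-job inequality
\[
s(C_i - r_i) \le d_{u_i}(r_i) + d_{v_i}(r_i),
\]
since dividing both sides by $2s$ gives $\alpha_i \ge (C_i - r_i)/2$, and summing over $i \in \mathcal{S}$ then yields $\sum_i \alpha_i \ge \tfrac{1}{2}\sum_i (C_i - r_i) = \tfrac{1}{2}\,ALG(s)$. So the real work is establishing this single-job bound.

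To prove it, I would fix a job $i$ and consider the interval $[r_i, C_i - 1]$ of time steps during which $i$ is alive but not yet scheduled. For each such $t$, when the greedy maximal-matching procedure evaluates $i$ (in release-time order), $i$ is skipped because at least one of its endpoints already has $s$ incident edges in the partial matching. Each of those $s$ blocking edges is contributed by a job of strictly higher priority than $i$; by the greedy ordering, every such job was released at time $\le r_i$ and is incident to $u_i$ or $v_i$, so every such job lies in $G(r_i)$. Let $B$ denote the set of all higher-priority jobs that are incident to $u_i$ or $v_i$. Since each job in $B$ contributes at least one half-edge at $u_i$ or $v_i$ in $G(r_i)$, and $i$ itself contributes two of the $d_{u_i}(r_i) + d_{v_i}(r_i)$ half-edges, we get $|B| \le d_{u_i}(r_i) + d_{v_i}(r_i) - 2$.

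The next step is the lower bound $|B| \ge s(C_i - r_i)$. Here the key observation is that every job in $B$ is unit-sized, so it is scheduled at most once and then leaves the system. Thus the $\ge s$ blockers appearing at distinct times in $[r_i, C_i - 1]$ are automatically disjoint, and summing over the $C_i - r_i$ blocked rounds yields at least $s(C_i - r_i)$ distinct members of $B$. Combining with the upper bound on $|B|$ from the previous paragraph gives the per-job inequality, and the lemma follows.

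The main obstacle I expect is the bookkeeping step showing that at \emph{every} blocked round at least $s$ strictly-higher-priority, $\{u_i,v_i\}$-incident jobs are actually in the partial matching at the moment $i$ is evaluated. Two facts have to line up: (i) because the greedy processes jobs in release-time order with a consistent tiebreak, no equal- or lower-priority job can occupy a slot at $u_i$ or $v_i$ before $i$ is considered, so the $s$ edges blocking $i$ are genuinely from $B$; and (ii) each blocker completes in the very round it is scheduled, so distinct blocked rounds contribute disjoint subsets of $B$. Once these two points are pinned down, the counting argument gives the bound immediately.
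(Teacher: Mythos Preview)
Your proposal is correct and follows essentially the same strategy as the paper: bound the flow time of each job $i$ by $(d_{u_i}(r_i)+d_{v_i}(r_i))/s$ via the observation that every round in which $i$ is blocked consumes higher-priority neighbors of $u_i$ or $v_i$ that were already present in $G(r_i)$, and these are unit jobs so they cannot reappear. If anything, your write-up is slightly more explicit than the paper's about the two bookkeeping points (that all $s$ blocking edges come from strictly-higher-priority jobs, and that blockers across distinct rounds are disjoint); the paper states only that ``some job $j\in\mathcal S_i$ must be scheduled'' and then divides by $s$ without spelling out the per-round count of~$s$.
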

\begin{proof}

We first claim that in the algorithms (with speedup $s$), the flow time of job $i$ is at most $\frac{d_{u_i}(r_i) + d_{v_i}(r_i)}{s}$.  To see this, let $\mathcal S_i$ be the set of jobs $j$ with  $r_j < r_i$ and $\{u_j, v_j\} \cap \{u_i, v_i\} \neq \emptyset$ that have not been completed by time $r_i$.  Note that $|\mathcal S_i| = d_{u_i}(r_i) + d_{v_i}(r_i)$ by definition.  Now consider some time $t$ after job $i$ has been released. If job $i$ has not yet been completed, and is not scheduled at time $t$, then some job $j \in \mathcal S_i$ must be scheduled at time $t$.  This is because the algorithm sorts by release time and constructs a greedy maximal matching in this order. In particular, if no  job $j$  in $S_i$ is scheduled at time $t$, then  we will schedule job $i$.  Thus the time that $i$ spends in the system before being scheduled is at most $d_{u_i}(r_i) + d_{v_i}(r_i)$.  Since we have speedup $s$, the flow time of job $i$ is at most $\frac{d_{u_i}(r_i) + d_{v_i}(r_i)}{s}$.

This now allows us to analyze the $\alpha$ variables.  We get that
\begin{align*}
    \sum_{i \in \mathcal S} \alpha_i &= \sum_{i \in \mathcal S} \frac{d_{u_i}(r_i) + d_{v_i}(r_i)}{2s} = \frac12 \sum_{i \in \mathcal S} \frac{d_{u_i}(r_i) + d_{v_i}(r_i)}{s} \\
    &\geq \frac12 \sum_{i \in \mathcal S} \left(C_i - r_i\right) = \frac12 \cdot ALG(s). \qedhere
\end{align*}
\end{proof}

\begin{lemma} \label{lem:beta}
    $\sum_{w \in V} \sum_{t \in \N} \beta_{w,t} \leq \frac{1}{s} \cdot ALG(s)$.
\end{lemma}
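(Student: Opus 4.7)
The plan is to unfold the definition of $\beta_{w,t}$ and then apply a straightforward double counting argument; in fact the bound will hold with equality. First I would substitute $\beta_{w,t} = d_w(t)/(2s)$ and pull out the constant factor, reducing the task to showing
$$\sum_{w \in V} \sum_{t \in \N} d_w(t) \;=\; 2 \cdot ALG(s).$$

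Next, for each fixed $t$ I would invoke the handshake identity: $G(t)$ is the (multi)graph whose edges correspond to the jobs currently in the system at time $t$, each contributing a single edge between its source and destination. Thus $\sum_{w \in V} d_w(t) = 2 |E(G(t))|$, and $|E(G(t))|$ equals the number of pending jobs at time $t$.

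Finally I would swap the order of summation to rewrite $\sum_t |E(G(t))|$ by counting each job's contribution. A job $i$ appears as an edge of $G(t)$ for exactly the time slots in the interval $[r_i, C_i)$, so it contributes $C_i - r_i$ to the total. Summing over $i \in \mathcal S$ gives $\sum_i (C_i - r_i) = ALG(s)$. Chaining these equalities yields
$$\sum_{w \in V} \sum_{t \in \N} \beta_{w,t} \;=\; \frac{1}{2s} \sum_{w \in V} \sum_{t \in \N} d_w(t) \;=\; \frac{1}{2s} \cdot 2 \cdot ALG(s) \;=\; \frac{1}{s}\cdot ALG(s),$$
which is in fact the desired bound achieved with equality.

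There is no real obstacle in this lemma; the only subtlety is keeping the endpoints of the interval $[r_i, C_i)$ consistent with the convention used elsewhere in the analysis (so that a job counts toward $G(t)$ precisely when it has been released but not yet completed). Once that bookkeeping is fixed, the argument is just the handshake identity plus an exchange of summations.
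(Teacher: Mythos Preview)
Your proposal is correct and follows essentially the same approach as the paper: substitute the definition of $\beta_{w,t}$, apply the handshake identity to turn $\sum_w d_w(t)$ into $2|E(G(t))|$, and then swap the order of summation so that each job contributes its flow time $C_i - r_i$. The paper's proof is written as a single chain of equalities but is identical in content, including the observation that the inequality in fact holds with equality.
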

\begin{proof}
This is essentially a straightforward calculation using the fact that the flow time of a job is equal (by definition) to the number of time steps in which the job is in the system.  So we have that
\begin{align*}
    \sum_{w \in V} \sum_{t \in \N} \beta_{w,t} &= \frac{1}{2s} \sum_{t \in \N} \sum_{w \in V} d_u(t) = \frac{1}{s} \sum_{t \in N} |E(G(t))|\\
    &= \frac{1}{s} \sum_{i \in \mathcal{S}} (C_i - r_i) = \frac{1}{s} \cdot ALG(s),
\end{align*}
as claimed.
\end{proof}

We can now prove our main theorem (about this simple setting).

\begin{theorem}
$ALG(2+\epsilon) \leq \frac{2(2+\epsilon)}{\epsilon} \cdot OPT$ for any $\epsilon > 0$.
\end{theorem}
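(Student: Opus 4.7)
The plan is to combine the three preceding lemmas via weak LP duality. First I would observe that by Lemma~\ref{lem:feasible}, the $(\alpha, \beta)$ vector defined earlier is a feasible dual solution, so by weak duality its objective value is a lower bound on the optimal value of the primal LP. Since the preliminary lemma showed that the primal LP is a relaxation of the scheduling problem, this value is also a lower bound on $OPT$, the flow time of the offline optimum. Thus
\[
OPT \;\geq\; \sum_{i \in \mathcal S} \alpha_i \;-\; \sum_{w \in V}\sum_{t \in \N} \beta_{w,t}.
\]

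Next I would plug in Lemma~\ref{lem:alpha} and Lemma~\ref{lem:beta}, which give a lower bound on $\sum_i \alpha_i$ and an upper bound on $\sum_{w,t} \beta_{w,t}$, both in terms of $ALG(s)$. This yields the clean inequality
\[
OPT \;\geq\; \left(\tfrac12 - \tfrac{1}{s}\right) \cdot ALG(s),
\]
which is only useful when $s > 2$, explaining why the theorem requires speed at least $2+\epsilon$.

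Finally, setting $s = 2+\epsilon$, a short calculation gives $\tfrac12 - \tfrac{1}{2+\epsilon} = \tfrac{\epsilon}{2(2+\epsilon)}$. Rearranging produces $ALG(2+\epsilon) \leq \tfrac{2(2+\epsilon)}{\epsilon} \cdot OPT$, as desired. The whole argument is essentially a one-line deduction from the three lemmas, so there is no real obstacle at this stage — all of the work has already gone into defining the right dual solution (so that Lemma~\ref{lem:feasible} holds with the $1/(2s)$ scaling) and into the combinatorial charging arguments of Lemmas~\ref{lem:alpha} and~\ref{lem:beta}. The only subtlety worth double-checking is the direction of the inequality obtained from weak duality and that the $(1/2 - 1/s)$ coefficient is positive, which forces $s > 2$ and motivates the $2+\epsilon$ speedup.
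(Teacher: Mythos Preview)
Your proposal is correct and essentially identical to the paper's own proof: both combine Lemmas~\ref{lem:feasible}, \ref{lem:alpha}, and \ref{lem:beta} via weak LP duality, compute $\tfrac12 - \tfrac{1}{2+\epsilon} = \tfrac{\epsilon}{2(2+\epsilon)}$, and rearrange. Your additional remark that the coefficient $\tfrac12 - \tfrac{1}{s}$ must be positive (forcing $s>2$) is a helpful observation that the paper leaves implicit.
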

\begin{proof}
Let $s = 2+\epsilon$.  Combining Lemmas~\ref{lem:alpha} and \ref{lem:beta} implies that 
\begin{small}
\begin{align*}
    \sum_{i \in \mathcal S} \alpha_i - \sum_{w \in V} \sum_{t \in \N} \beta_{w,t} &\geq \frac12 \cdot ALG(2+\epsilon)  - \frac{1}{2+\epsilon} \cdot ALG(2+\epsilon) \\
    &= \frac{\epsilon}{2(2+\epsilon)} \cdot ALG(2+\epsilon).
\end{align*}
\end{small}

We know from Lemma~\ref{lem:feasible} that $(\alpha, \beta)$ is a feasible dual solution, so by weak duality we get that
\begin{align*}
    ALG(2+\epsilon) &\leq \frac{2(2+\epsilon)}{\epsilon} \cdot \left(\sum_{i \in \mathcal S} \alpha_i - \sum_{w \in V} \sum_{t \in \N} \beta_{w,t} \right) \\\
    &\leq \frac{2(2+\epsilon)}{\epsilon} \cdot OPT. \qedhere
\end{align*}
\end{proof}

\subsection{General Online Model} \label{sec:general}

This section considers the most general model.   In this case each node $v$ has a degree bound $\de_v$ denoting the maximum number of jobs involving $v$ that can be scheduled at any point in time. A job $i$ has size $\ell_i$  and a weight $w_i$.    We will assume there is no restriction on how much a job is scheduled, so long as the degree constraints are satisfied at the vertices.  We will let $h_i = \frac{w_i}{\ell_i}$ be the \emph{density} of job $i$.  The goal is to optimize the total weighted flow time $\sum_{i \in [n]} w_i (c(i) - r_i)$. 

This section is organized as follows.  We first give our algorithm, which is simple and natural (highest-density-first).  We then spend most of the section analyzing it.  To do this, we show that we can focus on a different objective called weighted \emph{fractional} flow time. We will call the original objective weighted \emph{integral} flow to differentiate them.  We show that if the algorithm performs well for the fractional objective then the algorithm performs well for the integral objective with slightly more speed up.   Once we focus on the fractional flow objective, we can further show that we may assume all jobs are unit time in the analysis after scaling the weights.  We note that both of these reductions are done to simplify the analysis -- the algorithm itself does not change or make any of these assumptions, and could be analyzed directly (although doing so is more technical and complicated). 

With these simplifications and reductions in place we perform a dual-fitting analysis of the algorithm.  As in the simple case of Section~\ref{sec:simple}, the intuition is that the dual variables correspond to the ``extra cost'' to the algorithm incurred by a job when it arrives.  This is more complicated than in the simple setting due to the addition of weights and job size (or just weights after the reductions), but the ideas are the same.

\subsubsection{Algorithm: Highest-Density-First}

Recall that $\mathcal S(t)$ is the set of released but uncompleted jobs at time $t$.   When scheduling, we  say a node $u$ is saturated if it schedules $d_u$ jobs adjacent to it.  Order the jobs in  $\mathcal S(t)$ in decreasing order of their density.   In this order, schedule job $i$ if the two endpoints for $i$ are not saturated. We note that we schedule job $i$ as must as possible if its endpoints are not saturated, that is, we will create parallel links between the endpoints until one of them is saturated or the job is completely scheduled. 

\subsubsection{Reduction to the Unit Time Case}

This section is devoted to proving the following lemma, stating that we may assume in the analysis that each job is restricted to only being unit size but arbitrary weight.  This transformation is done only to simplify the analysis; the algorithm itself is unaffected.  

\begin{lemma}\label{lem:unit}
If highest-density-first is $s$-speed $c$-competitive on unit size instances, then highest-density-first is $(1+\eps)s$-speed $\frac{(1+\eps) c}{\epsilon}$-competitive for arbitrary size and arbitrary weight instances.
\end{lemma}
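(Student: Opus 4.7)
The plan is to construct an auxiliary unit-size instance $I'$ from the original instance $I$ and relate the performance of highest-density-first (HDF) on the two instances. Specifically, for each job $i=(u_i,v_i,\ell_i,r_i,w_i)$ in $I$, I would replace $i$ with $\ell_i$ unit-size subjobs $i_1,\ldots,i_{\ell_i}$, all with source $u_i$, destination $v_i$, release time $r_i$, size $1$, and weight $h_i=w_i/\ell_i$. Each subjob inherits the density $h_i$ of its parent, so HDF's priority order is preserved when comparing the two instances.

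First I would verify that $OPT(I') \le OPT(I)$. Given an optimal schedule for $I$, in each round $t$ in which job $i$ is processed $n_i(t)$ times, schedule $n_i(t)$ still-uncompleted subjobs of $i$ in $I'$; feasibility is immediate since the per-node loads are identical. Each subjob $i_k$ completes no later than $c(i)$, so its contribution is $h_i(c(i_k)-r_i) \le h_i(c(i)-r_i)$. Summing over the $\ell_i$ subjobs gives at most $w_i(c(i)-r_i)$, and summing over $i$ yields $OPT(I') \le OPT(I)$.

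The heart of the proof is the inequality
\[
HDF^{\text{int}}_{(1+\eps)s}(I) \le \tfrac{1+\eps}{\eps}\, HDF_s(I').
\]
I plan to establish this via an intermediate \emph{fractional} weighted flow time on $I$, defined as
\[
F^{\text{frac}}_s(I) := \sum_i h_i \sum_{k=1}^{\ell_i} \left(t^{(s)}_{i,k} - r_i\right),
\]
where $t^{(s)}_{i,k}$ is the round in which HDF at speed $s$ processes the $k$-th unit of $i$. The first substep is the observation that HDF on $I'$ at speed $s$ processes exactly the same units at the same times as HDF on $I$ at speed $s$: both policies greedily fill in the highest-density edges subject to the degree constraints, and subjobs of $i$ in $I'$ inherit $i$'s density. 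Identifying subjob $i_k$ of $I'$ with the $k$-th unit of $i$ processed on $I$ yields $HDF_s(I') = F^{\text{frac}}_s(I)$. The second substep, which I expect to be the main obstacle, is the fractional-to-integral inequality $HDF^{\text{int}}_{(1+\eps)s}(I) \le \tfrac{1+\eps}{\eps}\, F^{\text{frac}}_s(I)$; this is where the $(1+\eps)$ speedup is spent, in the standard pattern established by Becchetti et al.~\cite{BecchettiLMP06}. The intuition is that the extra $\eps s$ of capacity per node per round acts as a completion budget for the tail of each job: $F^{\text{frac}}_s$ charges $h_i R_i(t)$ at each time $t$ (where $R_i(t)$ is the residual size), while integral flow charges the full $w_i$ for every outstanding round, and the speedup ensures the gap is controlled by a $(1+\eps)/\eps$ factor. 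Formalizing this in the reconfigurable setting requires being careful about the per-node degree constraints at \emph{both} endpoints of each job, but the scheme is the usual one: charge each round that contributes to the integral flow to a $\ge \eps/(1+\eps)$ fraction of the corresponding contribution to $F^{\text{frac}}_s$, using that the $\eps s$ slack in degree capacity suffices to advance any unfinished job that HDF-at-speed-$s$ would have left in the system.

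Combining the three inequalities gives
\[
HDF^{\text{int}}_{(1+\eps)s}(I) \le \tfrac{1+\eps}{\eps}\, HDF_s(I') \le \tfrac{(1+\eps) c}{\eps}\, OPT(I') \le \tfrac{(1+\eps) c}{\eps}\, OPT(I),
\]
using the main inequality, the hypothesis applied to the unit instance $I'$, and the reduction $OPT(I') \le OPT(I)$, respectively, which is exactly the claim of the lemma.
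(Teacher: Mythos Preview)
Your proposal is correct and essentially identical to the paper's proof: the paper also routes through weighted fractional flow time, shows that HDF's fractional cost on $I$ equals its integral cost on the unit-size instance $I'$ with $OPT(I')\le OPT(I)$ (Lemma~\ref{lem:unit-fractional}), and establishes the fractional-to-integral conversion via a $(1+\eps)$ speedup at a $(1+\eps)/\eps$ loss (Lemma~\ref{lem:convert}). For that last step the paper's argument is exactly the clean one you gesture at: once speed-$s$ HDF has processed a $1/(1+\eps)$ fraction of job $i$, speed-$(1+\eps)s$ HDF has finished $i$ (by HDF's monotonicity in speed), and until that moment $i$'s fractional contribution is at least $\tfrac{\eps}{1+\eps}\,w_i$ per round, yielding the factor directly.
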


To prove the lemma first consider a different objective called \textbf{weighted fractional flow time}.  To make the distinction between these objectives, we call the original objective \textbf{weighted integral flow time}.  
Recall that $\mathcal S(t)$ is the released but uncompleted jobs at time $t$.  For each job $i \in \mathcal S(t)$ let $0 \leq \ell_i(t) \leq \ell_i$ be the remaining size of job $i$ at time $t$.   Then we define the weighted fractional flow time to be $\sum_{t \in \mathbb{N}} \sum_{i \in \mathcal S(t)} w_i \frac{\ell_i(t)}{\ell_i} $.  In this objective, each job $i$ pays $w_i \frac{\ell_i(t)}{\ell_i} $ at each time $t$ it is alive and unsatisfied.  Note that the original weighted integral flow time objective is equivalent to $\sum_{t \in \mathbb{N}} \sum_{i \in \mathcal S(t)} w_i$, and hence the difference between the two objectives is that in the fractional objective the weight of a job is scaled by $\frac{\ell_i(t)}{\ell_i}$ (the fraction of the job size that is uncompleted).

We now show that we can convert any algorithm for fractional flow to one for integral flow time (and thus in particular the highest-density-first algorithm). 

\begin{lemma}\label{lem:convert}
Given any online algorithm $A$ with $s$-speed that is $c$-competitive for fractional flow time, for any $\epsilon > 0$ there is an online algorithm $B$ that is $(1+\epsilon)s$-speed $\frac{(1+\eps) c}{\epsilon}$-competitive for integral flow time.  Further if $A$ is highest-density-first, so is $B$. 
\end{lemma}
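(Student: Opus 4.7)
My plan is to take $B$ to be the algorithm that simulates $A$ run at speed $(1+\epsilon)s$. Under this definition the highest-density-first rule is invariant to speed, so if $A$ is HDF then so is $B$. The proof has two pieces: first relate the integral weighted flow of $B$ (at speed $(1+\epsilon)s$) to the fractional weighted flow of $A$ (at speed $s$); then invoke the hypothesized $c$-competitive bound on $A$ together with the easy fact that $\mathrm{OPT}_F \leq \mathrm{OPT}$ (remaining fractions are at most $1$).

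The central technical step is the following invariant, which I would prove by induction over the events (releases and completions): for every job $i$ and every time $t$ at which $i$ is still alive in $B$, the cumulative work $A$ has done on $i$ satisfies
\[
X_i^A(t) \;\leq\; \frac{X_i^B(t)}{1+\epsilon}.
\]
I would argue this via a coupling of the two executions. Between events both algorithms run at their fixed speeds and, because they share the same priority structure, $B$ processes exactly the same set of jobs at each moment as $A$ does but at $(1+\epsilon)$ times the rate; so on any jointly active job the derivative of $X_i^B - (1+\epsilon)X_i^A$ is zero. The delicate case is when $B$ completes a job $i$ that $A$ has not: from that moment the two executions can diverge, but $B$'s freed capacity can only be redirected to other alive jobs, so $X_j^B$ jumps ahead of $(1+\epsilon)X_j^A$ for such $j$, which strengthens (rather than weakens) the invariant.

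Given the invariant, for any $i$ alive in $B$ at time $t$ we have $X_i^A(t) < \ell_i/(1+\epsilon)$, so $i$ is still alive in $A$ at $t$, with remaining fraction $\ell_i^A(t)/\ell_i \ge \epsilon/(1+\epsilon)$. Thus $i$ contributes at least $w_i \cdot \epsilon/(1+\epsilon)$ to $A$'s fractional flow at time $t$, and summing over $t$ and over $i \in \mathcal S_B(t)$ gives
\[
\mathrm{FracFlow}(A) \;\geq\; \frac{\epsilon}{1+\epsilon}\, \mathrm{IntFlow}(B).
\]
Combining with $\mathrm{FracFlow}(A) \leq c \cdot \mathrm{OPT}_F \leq c \cdot \mathrm{OPT}$ yields $\mathrm{IntFlow}(B) \leq \frac{(1+\epsilon)c}{\epsilon} \mathrm{OPT}$.

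The step I expect to be the main obstacle is a fully rigorous justification of the invariant across divergence events. A clean way to handle this is to view the two executions on a common ``priority-order'' time axis: if $A$ has spent $\tau$ units of ``HDF-time'' processing a given job by time $t$, then $B$ has spent at least $(1+\epsilon)\tau$ units on that same job by time $t$, because the only way $B$ spends less is if $B$ has already completed the job (in which case the invariant is vacuous) or if $B$ has skipped the job as lower-priority than something else, and in that case $A$ skipped it too. Making this coupling precise, while conceptually straightforward, requires a careful case analysis at every event time.
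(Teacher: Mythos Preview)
Your overall strategy---compare $B$'s integral flow to $A$'s fractional flow via the $\epsilon/(1+\epsilon)$ remaining-fraction bound, then use $\mathrm{OPT}_F\le\mathrm{OPT}$---is exactly what the paper does. The difference is in how $B$ is built, and this is where your proposal has a genuine gap.

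You take $B$ to be $A$'s decision rule executed at speed $(1+\epsilon)s$ and then try to prove the pointwise invariant $X_i^A(t)\le X_i^B(t)/(1+\epsilon)$ by arguing that the two executions stay coupled, with divergences only ever pushing $B$ further ahead. But in this model the greedy matching chosen by HDF is \emph{not} monotone under deletion of a completed job: removing a high-density job can cause HDF to block a job it was previously scheduling. Concretely, take degree bounds $1$ and jobs $j_1=(a,b)$, $j_2=(a,c)$, $j_3=(c,d)$ with densities $10>5>3$. With all three alive HDF selects $\{j_1,j_3\}$; with only $\{j_2,j_3\}$ alive HDF selects $\{j_2\}$ alone. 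So the moment $B$ finishes $j_1$, $B$ stops working on $j_3$ while $A$ continues to process it, and your invariant for $j_3$ immediately fails. The sentence ``$B$'s freed capacity can only be redirected to other alive jobs, so $X_j^B$ jumps ahead'' is precisely the step that breaks: the freed capacity at $a,b$ lets $B$ pick $j_2$, which \emph{consumes} capacity at $c$ that $B$ was previously spending on $j_3$.

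The paper avoids this entirely by defining $B$ differently: $B$ \emph{mirrors} $A$'s speed-$s$ schedule---whenever $A$ works on $i$, $B$ works on $i$ but at rate $(1+\epsilon)s$, doing whatever it likes with leftover capacity once $i$ is finished in $B$. With that construction the statement ``$B$ has completed $i$ by the first time $A$ has done a $1/(1+\epsilon)$ fraction of $i$'' is immediate by definition, and the $\frac{1+\epsilon}{\epsilon}$ ratio follows exactly as you wrote. The assertion that this mirrored $B$ can be taken to be HDF when $A$ is HDF is then argued separately (and rather tersely) in the paper; your construction makes that last clause automatic, but at the cost of an invariant that is not actually true in this setting.
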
 
\begin{proof}
Consider the algorithm $A$ for fractional flow time.  Each time $A$ schedules a job $i$ with speed $s$ the algorithm $B$ processes the same job with speed $(1+\eps)s$. If the job has already been completed in $B$ then $B$ can either be idle or work on some other job (e.g., the remaining with highest density).  Clearly the schedule produced by algorithm $B$ is feasible if the schedule produced by algorithm $A$ is feasible, since no job is scheduled by $B$ before it is released.   Notice that if $A$ is highest-density-first then $B$ can be highest-density-first. This is because highest-density-first has the property that if the algorithm is given more speed then the algorithm will either process the same job as the slower schedule or the algorithm will have completed the job.

Fix any job $i$. Consider the first time $t_i$ where a $\frac{1}{1+\eps}$ fraction of $i$ is completed in $A$.  So $\frac{\ell_i(t)}{\ell_i} \geq\frac{\eps}{1+\eps}$ for all $t \leq t_i$.  Thus every $t$ with $r_i \leq t \leq t_i$ contributes $\frac{\eps w_i}{1+\eps}$ or more to the objective.  Since $B$ schedules job $i$ at the  same times or earlier as $A$ with speed a $(1+\eps)$ factor faster, $B$ will complete the job by time $t_i$.  So $B$ pays at most $w_i$ for each $t$ with $r_i \leq t \leq t_i$, while $A$ pays at least $\frac{\eps w_i}{1+\eps}$.  Hence the ratio between the two costs is at most $\frac{1+\eps}{\eps}$.   

This holds for all jobs.  Further, the fractional optimal objective is only less than the integral optimal objective.  This gives the lemma.
\end{proof}

The previous lemma shows that we may focus on the weighted fractional flow time objective.  The next lemma shows that we can further restrict the instance to unit size jobs.  Combining these two lemmas will allow us to focus on the unit size case. 

\begin{lemma} \label{lem:unit-fractional}
For the fractional flow time objective, any instance can be transformed to a different problem instance such that (1) the objective for the highest-density-first algorithm is the same on \emph{both} instances, (2) the optimal objective is only less on the new instance, and (3) in the transformed instance all jobs are unit size.
\end{lemma}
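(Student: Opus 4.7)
The plan is to prove the lemma by exhibiting an explicit transformation: replace each job $i = (u_i, v_i, \ell_i, r_i, w_i)$ with $\ell_i$ copies, each of the form $(u_i, v_i, 1, r_i, h_i)$ where $h_i = w_i/\ell_i$. Condition (3) is then immediate, since every job in the transformed instance has size~$1$. What remains is to verify conditions (1) and (2).

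For condition (1), I would argue that highest-density-first behaves essentially identically on the two instances, provided we adopt a consistent tie-breaking rule: among the $\ell_i$ copies of $i$, always prefer the copy with the smallest index. Since all $\ell_i$ copies have density exactly $h_i$ and share the same endpoints and release time, at each time $t$ the set of copies that HDF schedules at the edge $\{u_i,v_i\}$ can be chosen to be an initial segment of the remaining copies. Thus if HDF on the original processes $k_t$ units of job $i$ at time $t$, HDF on the transformed instance processes $k_t$ copies of $i$ at time $t$. Letting $\ell_i(t)$ be the remaining size of $i$ in the original instance, the fractional-flow contribution of job $i$ at time $t$ is $w_i \cdot \ell_i(t)/\ell_i = h_i \cdot \ell_i(t)$; on the transformed instance, exactly $\ell_i(t)$ unit-size copies remain, each contributing $h_i \cdot 1$, for the same total. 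Summing over $t$ and $i$ gives equality of the two objective values.

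For condition (2), I would fix any schedule for the original instance and convert it into one for the transformed instance with fractional flow time no larger. If the original schedule processes $k_t$ units of job $i$ at time $t$, use the same edges at time $t$ to process $k_t$ copies of $i$ (choosing, say, those with the smallest indices among the unprocessed copies). Feasibility is preserved because degree bounds and release times are respected edge-by-edge and time-by-time. For the cost, let $\ell^{(j)}_i(t) \in \{0,1\}$ be the remaining size of copy $j$ at time $t$; then $\sum_{j=1}^{\ell_i} \ell^{(j)}_i(t) = \ell_i(t)$, and the transformed-instance fractional flow time of the copies of $i$ equals
\begin{align*}
\sum_{j=1}^{\ell_i} \sum_{t \geq r_i} h_i \cdot \ell^{(j)}_i(t)
&= h_i \sum_{t \geq r_i} \ell_i(t) \\
&= \sum_{t \geq r_i} w_i \cdot \frac{\ell_i(t)}{\ell_i},
\end{align*}
which is the fractional-flow contribution of $i$ in the original instance. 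Therefore $\mathrm{OPT}_{\text{new}} \leq \mathrm{OPT}_{\text{orig}}$.

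The only subtle point is the tie-breaking in HDF for part (1): one needs the algorithm's behavior on the transformed instance to be faithful to its behavior on the original. Since copies of the same job are indistinguishable from the algorithm's perspective apart from any fixed ordering, this is a purely definitional matter and poses no real obstacle; the rest of the argument is a straightforward bookkeeping computation using $\sum_j \ell^{(j)}_i(t) = \ell_i(t)$.
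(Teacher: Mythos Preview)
Your proposal is correct and takes essentially the same approach as the paper: replace each job $i$ by $\ell_i$ unit-size copies of weight $w_i/\ell_i$, couple schedules by mapping $k$ units of processing on $i$ at time $t$ to the $k$ lowest-indexed surviving copies, and verify via the identity $\sum_j \ell_i^{(j)}(t)=\ell_i(t)$ that fractional contributions coincide. The paper proves (1) and (2) simultaneously by observing that this coupling sends any schedule $A$ to a schedule $B$ of equal fractional cost and that $A$ is highest-density-first if and only if $B$ is, whereas you separate the two claims; the content is the same.
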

\begin{proof}
 Fix any instance. Consider transforming any job $i$ into $\ell_i$  new jobs $i'_1, i'_2, \ldots, i'_{\ell_i}$.  Each new  job $i'$ has size $1$ and weight $\frac{w_i}{\ell_i}$.    Note that the density of the jobs $i'_j$ are the same as $i$ for all $j$. 

Consider any schedule $A$ for the original instance. We create the analogous schedule $B$ for the new instance.  Whenever a job $i$ is processed by $A$ for $k$ units at some time $t$,  jobs $ \{i'_{j}, i'_{j+1}, \ldots i'_{j+k} \}$ are processed by $B$ such that $j$ is the lowest index possible among unsatisfied jobs. Both schedules then are intuitively working on the same job at the same times.  Notice that $A$ is highest-density-first on the original instance if and only if $B$ is the highest-density-first algorithm on the new instance, since the density of the jobs in $I'_i$ are the same as $i$. 

The fractional flow time objective is the same for $A$ and $B$  because each time $\ell_i(t)$ decreases by $1$, the weight of $i$ in $A$ changes from $w_{i} \frac{\ell_i(t)}{\ell_i}$ to $w_{i} \frac{\ell_i(t)-1}{\ell_i}$.  Similarly in $B$, there are $\ell_i(t)$ jobs alive in $I'_i$ and this decreases by $1$.  Their weight was $|I'_i| \frac{w_i}{\ell_i} = w_{i} \frac{\ell_i(t)}{\ell_i}$ and this decreases to $(|I'_i| -1) \frac{w_i}{\ell_i} = w_{i} \frac{\ell_i(t) -1}{\ell_i}$.
\end{proof}

Lemmas~\ref{lem:convert} and \ref{lem:unit-fractional} imply that if highest-density-first is $s$-speed $c$-competitive for unit-size jobs with respect to weighted fractional flow time, then for any $\epsilon > 0$, highest-density-first is $(1+\epsilon)s$-speed $\frac{(1+\epsilon)c}{\epsilon}$-competitive for general size jobs with respect to weighted integral flow time.  But for unit-size jobs, the fractional flow time is equal to the integral flow time.  Thus we have proved Lemma~\ref{lem:unit}.

\subsubsection{Analysis}


As in the simple setting of Section~\ref{sec:simple}, we perform a dual fitting argument.  Lemma~\ref{lem:unit} ensures that it is sufficient for us to analyze highest-density-first on instances where all jobs have unit size.  Notice that in this case, highest-density-first simply prioritizes jobs in order of largest weight.    Consider the following linear program, where $x_{i,t}$ is a  variable denoting how much $i$ is processed at time $t$ (in a true solution this will be either $0$ or $1$).

\begin{alignat*}{3}
\min & \quad  \sum_{i \in \mathcal S} \sum_{t \geq r_i} w_i (t-r_i) x_{i,t} \\
\text{s.t.} & \quad \sum_{t \geq r_i} x_{i,t} \geq 1 & \quad\forall i \in \mathcal S \\
& \quad \sum_{i \in \mathcal S : |\{u_i, v_i\} \cap \{w\}| = 1} x_{i,t} \leq \de_w  & \quad \forall w \in V,\ \forall t \in \N \\
& \quad x_{i,t} \geq 0 & \quad \forall i \in \mathcal S,\ \forall t \in \N
\end{alignat*}

As before we do not solve this LP, but rather use it only for analysis purposes. Note that the objective is the integral flow time.  The first set of constraints ensures each job is fully scheduled.  The second set of constraints ensures that the degree constraints are satisfied. 

\begin{lemma}
    If there is a schedule with weighted sum of flow times at most $F$, then there is a solution to the LP of cost at most $F$.
\end{lemma}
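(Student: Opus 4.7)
The plan is to mirror, essentially verbatim, the proof of the analogous LP-feasibility lemma from the simple setting in Section~\ref{sec:simple}, adapting it to the weighted/general-degree regime. After the reductions captured in Lemma~\ref{lem:unit}, every job has unit size, so any feasible schedule $\{E_t\}_{t \in \N}$ with weighted flow time $F$ assigns each job $i$ to a single completion time $C_i \geq r_i$. I will construct an integral LP solution by setting $x_{i,t} = 1$ precisely when $t = C_i$, and $x_{i,t} = 0$ otherwise.

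Next I will verify that this $x$ is LP-feasible. The covering constraint $\sum_{t \geq r_i} x_{i,t} \geq 1$ holds because every job is completed, contributing $x_{i,C_i} = 1$. The degree constraint $\sum_{i : w \in \{u_i,v_i\}} x_{i,t} \leq \de_w$ is inherited directly from feasibility of $E_t$: the number of indices $i$ with $w \in \{u_i,v_i\}$ and $C_i = t$ is exactly the degree of $w$ in the scheduled graph at time $t$, which is bounded by $\de_w$ by assumption. Non-negativity is immediate from the construction.

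Finally, I will compute the objective. Since $x_{i,t}$ is supported at the single time $t = C_i$, the LP cost is $\sum_{i \in \mathcal S} \sum_{t \geq r_i} w_i(t - r_i) x_{i,t} = \sum_{i \in \mathcal S} w_i(C_i - r_i)$, which is precisely the weighted flow time $F$ of the schedule. I do not anticipate any obstacle: the lemma is really a direct generalization of the simple-setting argument, with the weight $w_i$ carried through the objective and $\de_w$ replacing the unit right-hand side in the degree constraints. The only conceptual point worth flagging is that the preceding unit-size reduction lets each job contribute to $x$ at a single time slot, which is what makes the translation from schedule to integer LP solution completely transparent.
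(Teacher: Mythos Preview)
Your proposal is correct and matches the paper's own proof essentially line for line: both construct the integral solution $x_{i,t}=1$ exactly when job $i$ is scheduled at time $t$, check the covering and degree constraints from feasibility of the schedule, and read off the objective as the weighted flow time. The only cosmetic difference is that you make the reliance on the unit-size reduction explicit, which is a helpful clarification but not a new idea.
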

\begin{proof}
Consider a schedule $\{E_t\}_{t \in \mathbb{N}}$ with weighted sum of flow times $F$.  Since this is a feasible schedule, each $E_t$ satisfies the degree constraint at each vertex.  We create an LP solution as follows: if job $i$ is scheduled at time $t$  then we set $x_{i,t} = 1$, otherwise we set $x_{i,t} = 0$.  Since the original schedule is feasible, every job is scheduled at some point and thus the first LP constraint is satisfied.  Similarly, since each $E_t$ satisfies the degree constraints, the second set of  LP constraints are satisfied.  Thus this is a feasible LP solution.  The objective is the weighted flow time of the resulting schedule.
\end{proof}

The dual of this LP is the following. 
\begin{alignat*}{3}
\max & \quad  \sum_{i \in \mathcal S} \alpha_i - \sum_{u \in V} \sum_{t \in \N} \beta_{u,t} \\
\text{s.t.} & \quad \alpha_i - \frac{\beta_{u_i, t}}{d_{u_i}} - \frac{\beta_{v_i, t}}{d_{v_i}} \leq w_i(t - r_i)& \quad\forall i \in \mathcal S,\ \forall t \geq r_i  \\
& \quad \alpha_i \geq 0& \quad \forall u \in \mathcal S \\
& \quad \beta_{i,t} \geq 0& \quad \forall i \in \mathcal S,\ \forall t \in \N
\end{alignat*}

We will analyze our algorithm (highest-density-first, equivalent to highest-weight-first) by finding a feasible dual solution and relating this to the cost of the algorithm using resource augmentation.   Let $ALG(s)$ denote the total flow time of the algorithm when run with speedup $s$. 

Let's now define our dual solution.  But first we need a little bit of notation: for every node $k \in V$ and time $t$, let $\omega_k(t) = \sum_{i \in\mathcal S(t) : k \in \{u_i,v_i\}}  w_i$ denote the total  weight of jobs  adjacent to $k$ that have been released but are unsatisfied at time $t$.  Let $U_i(t)$ (resp.\ $V_i(t)$) be the jobs alive at time $t$ that share the end point $u_i$ (resp.\ $v_i$) with $i$. Then for every $i \in \mathcal S$, we set the $\alpha$ variables as follows. 
\begin{eqnarray*}
&&\alpha_i := \frac{1}{2s} \Bigg(\frac{1}{d_{u_i}} \left ( w_i \sum_{j \in U_i(r_i) : w_i <  w_j}   1 +   \sum_{j \in U_i(r_i) : w_i > w_j} w_j  \right )  \\
&&+ \frac{1}{d_{v_i}}  \left (w_i \sum_{j \in V_i(r_i) : w_i <  w_j}   1 +   \sum_{j \in V_i(r_i) : w_i > w_j} w_j \right )  \Bigg )
\end{eqnarray*}

It is not hard to see that, as in the simple setting of Section~\ref{sec:simple}, these dual variables correspond to an upper bound on the increase in the algorithm's cost due to the existence of job $i$.  Indeed, consider the first two terms depending on jobs $U_i$.  The first term states that job $i$ will wait on all jobs in $U_i(t)$ that have higher weight, and pay $w_i$ for each such time step.  The second term states that all lower weight jobs than $i$ will now need to wait on job $i$ before they are completed.  The last two terms are the same, but for jobs in $V_i(t)$.  Note that this expression is more complicated than in the simple setting since now we order by weights rather than by release time, so earlier jobs can be ``pushed back" due to job $i$ (unlike in the simple case).  

Similarly, we will set $\beta_{u,t} = \omega_u(t) / (2s)$, which is essentially the weighted version of the same dual variable in Section~\ref{sec:simple}.


We first show that this is a feasible dual solution.  Clearly all variables are nonnegative, so we just need to show the following lemma.

\begin{lemma} \label{lem:feasiblegeneral}
    $\alpha_i - \frac{\beta_{u_i, t}}{d_{u_i}} - \frac{\beta_{v_i, t}}{d_{v_i}} \leq w_i(t - r_i)$ for all $i \in \mathcal S$ and $t \geq r_i$.
\end{lemma}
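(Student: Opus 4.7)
The plan is to prove the inequality directly (not by induction on $t$), handling the $u_i$ and $v_i$ contributions symmetrically. By Lemma~\ref{lem:unit} we may assume all jobs have unit size, so every scheduled job completes in a single slot, and under speedup $s$ at most $sd_k$ jobs with endpoint $k$ can be scheduled (hence completed) per slot. I would first observe that jobs released in $(r_i,t]$ only \emph{increase} $\omega_{u_i}(t)$, so it suffices to establish the inequality after replacing $\omega_{u_i}(t)$ by the smaller quantity $\omega^i_{u_i}(t) := \sum_{j \in U_i(r_i) \cap \mathcal{S}(t)} w_j$, which counts only the jobs that were already alive at $r_i$ and remain alive at $t$; likewise for $v_i$. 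This reduces the problem to a purely structural computation over $U_i(r_i)$ and $V_i(r_i)$.

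The key algebraic step is the identity
\[
w_i |U_i^>(r_i)| + \sum_{j \in U_i^<(r_i)} w_j \;=\; \sum_{j \in U_i(r_i)} \min(w_i, w_j) \;-\; w_i \, |U_i^=(r_i)|,
\]
where $U_i^>(r_i), U_i^<(r_i), U_i^=(r_i)$ partition $U_i(r_i)$ by the sign of $w_j - w_i$. Dropping the nonnegative tie term and splitting the $\min$-sum according to whether a job is still alive at $t$ gives
\[
\Bigl(w_i|U_i^>(r_i)| + \textstyle\sum_{j \in U_i^<(r_i)} w_j\Bigr) - \omega^i_{u_i}(t) \;\leq\; \sum_{j \in C_u(t)} \min(w_i, w_j) + \!\!\!\!\sum_{j \in U_i(r_i) \cap \mathcal{S}(t)}\!\!\!\! \bigl(\min(w_i, w_j) - w_j\bigr),
\]
where $C_u(t) := U_i(r_i) \setminus \mathcal{S}(t)$ is the set of jobs from $U_i(r_i)$ completed by time $t$. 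The second sum is nonpositive term-by-term (each summand is $\min(0, w_i - w_j) \leq 0$), and the first sum is trivially at most $w_i |C_u(t)|$.

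Finally, any job in $C_u(t)$ must have been scheduled in some slot of $[r_i, t-1]$, and at most $sd_{u_i}$ jobs with endpoint $u_i$ fit per slot, so $|C_u(t)| \leq sd_{u_i}(t-r_i)$. Thus the $u_i$-contribution to the left-hand side of the constraint is at most $\tfrac{w_i |C_u(t)|}{2sd_{u_i}} \leq \tfrac{w_i(t-r_i)}{2}$, and combining with the symmetric bound at $v_i$ yields $\alpha_i - \tfrac{\beta_{u_i,t}}{d_{u_i}} - \tfrac{\beta_{v_i,t}}{d_{v_i}} \leq w_i(t-r_i)$, as required.

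The main obstacle, relative to the unweighted case of Lemma~\ref{lem:feasible}, is that a per-step induction on $t$ fails here: the decrease $\omega_{u_i}(t-1) - \omega_{u_i}(t)$ can easily exceed $sd_{u_i} w_i$ whenever the algorithm completes jobs much heavier than $i$ adjacent to $u_i$, so the decrease cannot be paid for by the one unit of ``budget'' $w_i$ gained per time step. The direct argument above circumvents this by exploiting the ``truncated'' form of $\alpha_i$ — heavier jobs contribute only $w_i$ rather than their own $w_j$ — so that the $w_j - w_i$ slack from still-alive heavy jobs precisely cancels the excess decrease in $\omega$.
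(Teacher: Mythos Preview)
Your proof is correct and follows essentially the same approach as the paper: split by endpoint, replace $\omega_{u_i}(t)$ by the contribution of jobs already alive at $r_i$ (the paper calls this set $U_i(r_i)\setminus P_i(t)$), reduce the resulting expression to $\frac{w_i}{2sd_{u_i}}|C_u(t)|$, and then invoke $|C_u(t)|\le sd_{u_i}(t-r_i)$. The only difference is cosmetic---you package the case analysis via the identity with $\min(w_i,w_j)$ and handle ties explicitly, whereas the paper does the same cancellation term-by-term.
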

 
%

\begin{proof} 
Consider any time $t \geq r_i$.  We have the following.
\begin{align}
 &   \alpha_i- \frac{\beta_{u_i, t}}{d_{u_i}} - \frac{\beta_{v_i, t}}{d_{v_i}} = \nonumber\\
 & \frac{1}{2s} \Bigg(\frac{1}{d_{u_i}} \bigg( w_i \sum_{j \in U_i(r_i), w_i <  w_j}   1 +   \sum_{j \in U_i(r_i), w_i > w_j} w_j  \bigg ) \nonumber \\
  &\;\;\;\; + \frac{1}{d_{v_i}}  \bigg (w_i \sum_{j \in V_i(r_i), w_i <  w_j}   1  +   \sum_{j \in V_i(r_i), w_i > w_j} w_j  \bigg )  \Bigg ) \nonumber \\  &\;\;\;\; -  \frac{\omega_{u_i}(t)}{2sd_{u_i}} -  \frac{\omega_{v_i}(t)}{2sd_{v_i}}    \label{fittingeqn}
\end{align}

We now bound the  first and third term by $\frac{1}{2}w_i(t-r_i)$, this is, half of the right hand side of the constraint.  The second and fourth will behave similarly.  Together, this will show the constraint is satisfied. 

We have the following.
\begin{eqnarray}
 && \frac{1}{2sd_{u_i}} \left ( w_i \sum_{j \in U_i(r_i), w_i <  w_j}   1 +   \sum_{j \in U_i(r_i), w_i > w_j} w_j \right )   \nonumber\\
  &&\;\;\;\; -   \frac{\omega_{u_i}(t)}{2sd_{u_i}} \nonumber \\
  &=&  \frac{1}{2sd_{u_i}} \left ( w_i \sum_{j \in U_i(r_i), w_i <  w_j}   1 +   \sum_{j \in U_i(r_i), w_i > w_j} w_j \right )  \nonumber \\
  &&\;\;\;\;   - \frac{1}{2sd_{u_i}}\sum_{j \in U_i(t)}   w_j  \;\;\;\; \mbox{[ def.\ of $\omega_{u_i}$]}\label{eq1}
\end{eqnarray}
  
Consider the last term.  Let $P_i(t) =U_i(r_i) \setminus U_i(t)$ denote the set of jobs in $U_i(r_i) $ that are completed (processed) by time $t$.  Then \eqref{eq1} is at most  the following, with equality if no jobs arrive during $[r_i,t]$. 
\begin{eqnarray*}
  &&\leq \frac{1}{2sd_{u_i}} \left ( w_i \sum_{j \in U_i(r_i), w_i <  w_j}   1 +   \sum_{j \in U_i(r_i), w_i > w_j} w_j \right )  \nonumber \\
  &&\;\;\;\; - \frac{1}{2sd_{u_i}}\sum_{j \in U_i(r_i) \setminus P_{i}(t)}   w_j  \label{eq2}
\end{eqnarray*}

Now we can use some of the jobs which appear in the the last term to cancel out the same jobs in the second term, and then use the relationship in the summations between the weights of jobs $j$ and $i$ to rewrite everything in terms of $w_i$.  This gives that~\eqref{eq2} is
\begin{small}
 \begin{eqnarray}
  &=& \frac{1}{2sd_{u_i}} \left ( w_i \sum_{j \in U_i(r_i), w_i <  w_j}   1  \right )   \nonumber \\
  &&\;\;\;\;- \frac{1}{2sd_{u_i}}\sum_{j \in U_i(r_i) \setminus P_{i}(t), w_i < w_j}   w_j  +  \frac{1}{2sd_{u_i}}\sum_{j \in  P_{i}(t), w_i > w_j}   w_j  \nonumber\\
    &\leq& \frac{1}{2sd_{u_i}} \left ( w_i \sum_{j \in U_i(r_i), w_i <  w_j}   1  \right ) \nonumber\\
  &&\;\;\;\;     - \frac{1}{2sd_{u_i}}\sum_{j \in U_i(r_i) \setminus P_{i}(t), w_i < w_j}   w_i \nonumber\\ 
  &&\;\;\;\; +  \frac{1}{2sd_{u_i}}\sum_{j \in  P_{i}(t), w_i > w_j}   w_i. \label{eq3}
\end{eqnarray}
\end{small}
 
 Now we combine the first term with the second to get that \eqref{eq3} is equal to
 \begin{eqnarray}
    &=& \frac{1}{2sd_{u_i}}\sum_{j \in  P_{i}(t), w_i < w_j}   w_i  +  \frac{1}{2sd_{u_i}}\sum_{j \in  P_{i}(t), w_i > w_j}   w_i \nonumber \\
    &=& \frac{w_i}{2sd_{u_i}} |P_i(t)|. \label{eq4}
\end{eqnarray}
 
We know that $\frac{1}{sd_{u_i}} |P_i(t)| \leq t-r_i$ because the algorithm can processes at most $s \cdot d_{u_i}$ jobs at each time step adjacent to $u_i$ and $ P_i(t)$ are jobs processed at $u_i$ during $[r_i,t]$.  Thus \eqref{eq5} is at most $\frac12 w_i(t-r_i)$.  Putting this all together, we have that 
\begin{eqnarray*}
&&\frac{1}{2sd_{u_i}} \left ( w_i \sum_{j \in U_i(r_i), w_i <  w_j}   1 +   \sum_{j \in U_i(r_i), w_i > w_j} w_j \right )   -   \frac{\omega_{u_i}(t)}{2sd_{u_i}}\\
  &&\leq \frac{1}{2} w_i  (t-r_i)  \label{eq5}   
\end{eqnarray*}
    
This bounds the first and third term of equation $\eqref{fittingeqn}$.  The second and fourth have the exact same analysis bounding them by $\frac{1}{2} w_i  (t-r_i) $.  Putting them together implies that $\eqref{fittingeqn}$ is bounded by $w_i(t-r_i)$, proving the lemma. 
\end{proof}

We will now prove two lemmas which will allow us to bound the cost of this dual solution.  Let $ALG(s)$ denote the total weighted flow time of the online algorithm. 

\begin{lemma} \label{lem:alphageneral}
    $\sum_{i \in \mathcal S} \alpha_i \geq \frac12 \cdot ALG(s)$.
\end{lemma}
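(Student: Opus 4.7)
The plan is to express both $\sum_i \alpha_i$ and $ALG(s)$ as sums over the set of \emph{co-alive pairs} of jobs sharing a network endpoint, and then obtain the desired factor-of-two gap by comparing term-by-term. Since Lemma~\ref{lem:unit} lets us restrict to unit-size jobs, highest-density-first becomes highest-weight-first, and the flow time of job $i$ is simply $C_i - r_i$. The key structural property of this algorithm is that at every time $t \in [r_i, C_i)$, since $i$ is alive yet not being scheduled, at least one of its endpoints must be saturated by $s \cdot d_k$ jobs of weight strictly greater than $w_i$; otherwise HDF, which scans jobs in decreasing weight order, would have scheduled $i$ when it reached it.

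First I would upper bound $ALG(s)$. For each job $i$ I would partition the wait interval $[r_i, C_i)$ into $T_i^u$ and $T_i^v$ by assigning each blocked time step to one saturated endpoint. At each $t \in T_i^u$, the $s d_{u_i}$ jobs occupying $u_i$ all have weight exceeding $w_i$ and, being unit-size, are each processed at $u_i$ exactly once (at their own completion time), so different $t$'s contribute disjoint job sets. Writing $Z_i^{u_i}$ for the set of jobs adjacent to $u_i$ of weight exceeding $w_i$ that are scheduled at $u_i$ during $[r_i, C_i)$, this gives $s d_{u_i} |T_i^u| \leq |Z_i^{u_i}|$, and analogously at $v_i$. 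Multiplying by $w_i$ and summing over $i$ yields
\[
ALG(s) \;=\; \sum_i w_i (|T_i^u| + |T_i^v|) \;\leq\; \frac{1}{s} \sum_i \left(\frac{w_i |Z_i^{u_i}|}{d_{u_i}} + \frac{w_i |Z_i^{v_i}|}{d_{v_i}}\right).
\]

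Next I would re-index both sides as sums over unordered co-alive pairs $\{i,j\}$ at a shared endpoint $k$. In the $ALG$ bound, $j \in Z_i^k$ precisely when $\{i,j\}$ is a co-alive pair at $k$ with $w_j > w_i$, so each such pair is charged once to the lower-weight side, contributing $w_i/(s d_k) = \min(w_i,w_j)/(s d_k)$. For $\sum_i \alpha_i$, I would unpack the definition: each earlier-released co-alive neighbor $j$ of $i$ at shared endpoint $k$ contributes to $\alpha_i$ either $w_i$ (from the $w_i$-coefficient term, when $w_j > w_i$) or $w_j$ (from the $S_{<}$ term, when $w_j < w_i$), which equals $\min(w_i,w_j)$ in both cases. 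After the factor $1/(2 s d_k)$, the index set becomes exactly the same collection of co-alive pairs at shared endpoints, giving
\[
\sum_i \alpha_i \;=\; \sum_{\{i,j\}\text{ co-alive at } k} \frac{\min(w_i,w_j)}{2 s d_k} \;\geq\; \frac{1}{2}\, ALG(s).
\]

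The main obstacle will be carefully verifying that the two pair-indexed sums range over the same index set up to the factor of $2$, and handling boundary cases. For a co-alive pair $\{i,j\}$ with $w_j > w_i$, one must check that $j$ completes before $i$ (so $j \in Z_i^k$), which follows because $j$ has strictly higher priority at the shared endpoint once both are alive. Edge cases such as simultaneous completion or tied release times could cause the $\alpha$ sum to contain extra mass not charged in the $ALG$ bound, but this only strengthens the inequality. Consistent tie-breaking for equal weights and for the assignment of each blocked time step to a single saturated endpoint handle these edge cases cleanly.
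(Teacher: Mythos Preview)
Your approach is correct and is essentially the paper's own argument, just organized in the reverse order: the paper first re-indexes $2\sum_i\alpha_i$ into $\sum_i w_i\cdot(\text{count of co-alive higher-weight neighbors at }u_i,v_i)/(sd_k)$ and then argues each summand dominates $w_i(C_i-r_i)$ via the saturation observation, whereas you first upper-bound $ALG(s)$ by the blocking/saturation count and then match it against the pair expansion of $\sum_i\alpha_i$. Both hinge on the same two ingredients: (i) whenever $i$ waits, one endpoint is filled by $sd_k$ strictly heavier unit jobs, each consumed exactly once; and (ii) $\sum_i\alpha_i$ is exactly the sum of $\min(w_i,w_j)/(2sd_k)$ over co-alive endpoint-sharing pairs.

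One small correction: the ``main obstacle'' you flag---that every co-alive pair with $w_j>w_i$ satisfies $j\in Z_i^k$---is neither needed nor always true ($j$ can outlive $i$ if $j$'s \emph{other} endpoint remains saturated by still-heavier jobs). Fortunately your inequality only uses the trivial containment $\{j\in Z_i^k\}\subseteq\{\text{co-alive at }k,\ w_j>w_i\}$, so the chain $ALG(s)\le \sum_{j\in Z_i^k} w_i/(sd_k)\le \sum_{\text{co-alive}}\min(w_i,w_j)/(sd_k)\le 2\sum_i\alpha_i$ closes without it.
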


\begin{proof}
Recall that $U_i(t)$ denotes all jobs that have not yet been processed by time $t$ which have $u_i$ as one endpoint (including job $i$ itself), and similarly for $V_i(t)$.  Then we have that
\begin{small}
\begin{align}
    \sum_{i \in \mathcal S}& (2\alpha_i) = \frac{1}{s} \sum_{i \in \mathcal S} \Bigg(\frac{1}{d_{u_i}} \left(w_i \sum_{j \in U_i(r_i) : w_i < w_j} 1 + \sum_{j \in U_i(r_i) : w_i > w_j} w_j\right) \nonumber \\
    & + \frac{1}{d_{v_i}} \left(w_i \sum_{j \in V_i(r_i) : w_i < w_j} 1 + \sum_{j \in V_i(r_i) : w_i > w_j} w_j\right) \Bigg) \nonumber\\
    &= \sum_{i \in \mathcal S} w_i \Bigg(\frac{1}{s d_{u_i}} |\{j \in U_i(r_i) : w_i < w_j\}| \nonumber \\
    & \qquad + \frac{1}{s d_{u_i}}|\{j: i \in U_i(r_j), w_i < w_j\}|) \nonumber \\
    &\qquad + \frac{1}{s d_{v_i}} |\{j \in V_i(r_i) : w_i < w_j\}| \nonumber \\
    &\qquad + \frac{1}{s d_{v_i}} |\{j : i \in V_i(r_j), w_i < w_j\}| \Bigg) \label{eq:upper}\\
    &\geq \sum_{i \in \mathcal S} w_i (C_i - r_i) = ALG(s) \nonumber.
\end{align}
\end{small}
The second equality  has arranged terms as follows.  Fix job $i$.  The first term counts jobs $j$ that require node $u_i$,  have higher weight than $i$, and are released and unsatisfied when $i$ arrives; this term comes from  $\alpha_i$.  The second term counts jobs $j$ with higher weight than $i$, that require node $u_i$, and  arrive during when $i$ is released at unsatisfied; this term comes from each such $\alpha_j$.  The last two terms are analogous for node $v_i$.

The final inequality is because the $i$'th term in the sum of \eqref{eq:upper} is an upper bound on the weighted flow time of job $i$.  This is because the only jobs which can prevent job $i$ from finished are either higher-weight jobs that show up earlier than $r_i$ at $u_i$ (the first term), higher-weight jobs which show up at $u_i$ after $r_i$ before job $i$ has finished (the second term), and similarly for jobs which show up at $v_i$ (the third and fourth terms).  Then we  multiply these jobs by the rate at which they are processed ($\frac{1}{s d_{u_i}}$ or $\frac{1}{s d_{v_i}}$).
\end{proof}

Next we bound the contribution of the $\beta$ variables. 

\begin{lemma} \label{lem:betageneral}
    $\sum_{w \in V} \sum_{t \in \N} \beta_{w,t} \leq \frac{1}{s} \cdot ALG(s)$.
\end{lemma}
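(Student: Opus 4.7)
The plan is to simply unpack the definitions and swap the order of summation, mirroring the calculation in Lemma~\ref{lem:beta} from the simple setting but now accounting for the weights. Specifically, I will substitute $\beta_{w,t} = \omega_w(t)/(2s)$ and then expand $\omega_w(t)$ using its definition as a sum over all jobs $i \in \mathcal{S}(t)$ that have $w$ as an endpoint.

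The key observation is then that, after swapping the two outer sums so the sum over vertices $w \in V$ is innermost, each job $i$ alive at time $t$ contributes its weight $w_i$ exactly twice to the sum $\sum_{w \in V} \omega_w(t)$: once when $w = u_i$ and once when $w = v_i$. Hence $\sum_{w \in V} \omega_w(t) = 2 \sum_{i \in \mathcal{S}(t)} w_i$, and the factor of $2$ cancels the $1/2$ in the definition of $\beta$.

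Finally, I will interpret $\sum_{t \in \N} \sum_{i \in \mathcal{S}(t)} w_i$ as the total weighted flow time. For each job $i$, it appears in $\mathcal{S}(t)$ precisely during the time steps $r_i \leq t < C_i$ (with unit sizes, since by Lemma~\ref{lem:unit} we may assume unit-size jobs and fractional flow equals integral flow here), and thus contributes $w_i(C_i - r_i)$ to the double sum. Summing over $i$ gives exactly $ALG(s)$, yielding the claimed bound of $\tfrac{1}{s} \cdot ALG(s)$.

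I don't expect any serious obstacle: the proof is essentially an identity, parallel to Lemma~\ref{lem:beta}, and requires only a careful bookkeeping check that each job contributes $w_i$ to exactly two of the vertex sums (its two endpoints) at every time step during which it is alive. The only mild subtlety is keeping straight that we are working in the unit-size regime courtesy of Lemma~\ref{lem:unit}, so that ``alive at time $t$'' directly corresponds to one unit of contribution to the integral weighted flow time.
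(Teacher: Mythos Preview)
Your proposal is correct and is essentially identical to the paper's proof: the paper substitutes $\beta_{a,t}=\omega_a(t)/(2s)$, swaps the order of summation, observes that each job $i\in\mathcal S(t)$ contributes $w_i$ at exactly its two endpoints (so the $2$ cancels the $1/2$), and then identifies $\sum_{t}\sum_{i\in\mathcal S(t)} w_i$ with $ALG(s)$. In fact both you and the paper actually obtain equality, not just the inequality stated in the lemma.
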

\begin{proof}
This is essentially a straightforward calculation using the definition of weighted flow time and the fact that each job has two endpoints. Let $c(i)$  be the completion time of $i$ in highest-density-first's schedule.  We have the following.
\begin{align*}
    \sum_{a \in V} \sum_{t \in \N} \beta_{a,t} &= \frac{1}{2s} \sum_{t \in \N} \sum_{a \in V} \sum_{i \in \mathcal S(t) : a \in \{u_i,v_i\}} w_i \\
    &= \frac{1}{s} \sum_{i \in \mathcal S}\sum_{r_i \leq t \leq c(i)} w_i = \frac{1}{s} \cdot ALG(s). \qedhere
\end{align*} \end{proof}

We can now prove our main theorem.  In the following, let $OPT$ be the optimal solution (without speedup).

\begin{lemma}\label{lem:mainunit}
$ALG(2+\epsilon) \leq \frac{2\eps+4}{\eps}\cdot OPT$ for any $\epsilon > 0$.
\end{lemma}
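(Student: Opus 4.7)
The plan is to mirror the proof structure from the simple setting (Section~\ref{sec:simple}), combining the three lemmas on the dual solution (feasibility, lower bound via $\alpha$, upper bound via $\beta$) through weak LP duality. Since Lemmas~\ref{lem:feasiblegeneral}, \ref{lem:alphageneral}, and \ref{lem:betageneral} already do all of the real work, the remaining argument is purely algebraic.

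First I would set $s = 2 + \epsilon$ and combine Lemmas~\ref{lem:alphageneral} and \ref{lem:betageneral} to obtain
\begin{align*}
\sum_{i \in \mathcal S} \alpha_i - \sum_{w \in V} \sum_{t \in \N} \beta_{w,t}
&\geq \tfrac{1}{2} \cdot ALG(2+\epsilon) - \tfrac{1}{2+\epsilon} \cdot ALG(2+\epsilon) \\
&= \tfrac{\epsilon}{2(2+\epsilon)} \cdot ALG(2+\epsilon).
\end{align*}
Next, Lemma~\ref{lem:feasiblegeneral} says that $(\alpha,\beta)$ is a feasible dual solution (nonnegativity is immediate from the definitions, since all weights are nonnegative and $\omega_k(t) \geq 0$), so by weak LP duality the dual objective is at most the primal optimum, which is in turn at most $OPT$ (the integer optimum). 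Rearranging gives
\begin{align*}
ALG(2+\epsilon) \leq \tfrac{2(2+\epsilon)}{\epsilon} \cdot OPT = \tfrac{2\epsilon + 4}{\epsilon} \cdot OPT.
\end{align*}

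There is essentially no obstacle here: the nontrivial content lives in the three preceding lemmas, and this theorem is just the weak-duality wrap-up. The only thing worth double-checking is that $OPT$ on the right-hand side refers to the integral optimum on the reduced (unit-size, weighted) instance, which is consistent with the LP relaxation we wrote, since the LP's objective is the weighted integral flow time and the preceding reduction lemmas (Lemmas~\ref{lem:convert} and \ref{lem:unit-fractional}) have already handled the passage from general sizes and the fractional-to-integral conversion separately.
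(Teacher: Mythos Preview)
Your proposal is correct and matches the paper's proof essentially line for line: set $s=2+\epsilon$, subtract the bound from Lemma~\ref{lem:betageneral} from that of Lemma~\ref{lem:alphageneral} to get $\sum_i \alpha_i - \sum_{w,t}\beta_{w,t} \geq \tfrac{\epsilon}{2(2+\epsilon)} ALG(2+\epsilon)$, then invoke feasibility (Lemma~\ref{lem:feasiblegeneral}) and weak duality. The extra remarks you make about nonnegativity and about which $OPT$ is meant are accurate and consistent with the paper's setup.
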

\begin{proof}
Let $s = 2+\epsilon$.  Combining Lemmas~\ref{lem:alphageneral} and \ref{lem:betageneral} implies that 
\begin{small}
\begin{align*}
    &\sum_{i \in \mathcal S} \alpha_i - \sum_{w \in V} \sum_{t \in \N} \beta_{w,t} \\
    &\geq \frac12 \cdot ALG(2+\epsilon)  - \frac{1}{2+\epsilon} \cdot ALG(2+\epsilon) \geq \frac{\eps}{2\eps+4} \cdot ALG(2+\epsilon).
\end{align*}
\end{small}
We know from Lemma~\ref{lem:feasiblegeneral} that $(\alpha, \beta)$ is a feasible dual solution, so by weak duality we get that
\begin{align*}
    &ALG(2+\epsilon) \\
    &\leq \frac{2\eps+4}{\eps}\cdot \left(\sum_{i \in \mathcal S} \alpha_i - \sum_{w \in V} \sum_{t \in \N} \beta_{w,t} \right) \leq \frac{2\eps+4}{\eps} \cdot OPT. \qedhere
\end{align*}
\end{proof}

Finally we get our main theorem by combining the previous lemma with the reduction to the unit time instance in Lemma~\ref{lem:unit}.  Note that by setting $\epsilon$ to any appropriate constant (say, $1/2$), Theorem~\ref{thm:mainarb} gives an $O(1)$-competitive algorithm with $O(1)$-speedup.

\begin{theorem}\label{thm:mainarb}
Highest-density-first is $(2+\eps)$-speed $O(\frac{1}{\eps^2})$-competitive for the total flow time objective when jobs have arbitrary sizes and weights and the degree bounds are arbitrary for any $0 < \epsilon \leq 1$. 
\end{theorem}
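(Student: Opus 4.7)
The theorem is essentially a combination of the two main ingredients already developed in the section, so the plan is to trace through the compositions carefully and then choose the right parameters.

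\textbf{Overall approach.} Lemma~\ref{lem:mainunit} proves the theorem restricted to unit-size instances: it shows highest-density-first is $(2+\epsilon')$-speed $O(1/\epsilon')$-competitive there. Lemma~\ref{lem:unit} converts any unit-size guarantee into a guarantee for arbitrary sizes and weights at the cost of an extra $(1+\epsilon'')$ factor in speed and an extra $(1+\epsilon'')/\epsilon''$ factor in the competitive ratio. So the plan is simply to apply Lemma~\ref{lem:unit} to the guarantee of Lemma~\ref{lem:mainunit}, and then reparametrize.

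\textbf{Key steps, in order.} First, fix a final target speedup $2+\epsilon$ and split it into an ``inner'' speedup $s := 2 + \epsilon'$ (for the unit-size analysis) and an ``outer'' factor $1+\epsilon''$ (for the reduction), with $(1+\epsilon'')(2+\epsilon') = 2+\epsilon$. Second, invoke Lemma~\ref{lem:mainunit} to get that highest-density-first is $s$-speed $c$-competitive on unit-size instances, where $c = (2\epsilon' + 4)/\epsilon' = O(1/\epsilon')$. Third, invoke Lemma~\ref{lem:unit} with these values of $s$, $c$, and $\epsilon''$ to conclude that highest-density-first is $(1+\epsilon'')s = (2+\epsilon)$-speed, $\frac{(1+\epsilon'')c}{\epsilon''} = O\!\left(\frac{1}{\epsilon' \epsilon''}\right)$-competitive on arbitrary-size, arbitrary-weight instances. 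Finally, choose $\epsilon'$ and $\epsilon''$ both to be $\Theta(\epsilon)$; the natural choice is $\epsilon'' = \epsilon/3$ and $\epsilon' = (\epsilon/3)/(1+\epsilon/3)$ which exactly realizes the identity $(1+\epsilon'')(2+\epsilon') = 2+\epsilon$. For $0 < \epsilon \le 1$, both $\epsilon'$ and $\epsilon''$ are $\Theta(\epsilon)$, so the product $1/(\epsilon'\epsilon'') = O(1/\epsilon^2)$, yielding the claimed competitive ratio.

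\textbf{Anticipated obstacles.} The analytic content of this step is trivial; everything nontrivial was done in Lemma~\ref{lem:mainunit} (the dual-fitting argument) and Lemma~\ref{lem:unit} (the reduction to unit size). The only real care needed is the algebraic bookkeeping: one has to verify that the composition of the $(1+\epsilon'')$ factor outside with the $(2+\epsilon')$ factor inside can be made to equal exactly $2+\epsilon$ while keeping both $\epsilon', \epsilon'' = \Theta(\epsilon)$ in the regime $0 < \epsilon \le 1$. This is immediate from the explicit choice above, but it is the one point where one must check that the final competitive ratio really is $O(1/\epsilon^2)$ rather than, e.g., $O(1/\epsilon^3)$ due to a bad parameter split. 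Also, one should note (as the paper already does) that Lemma~\ref{lem:unit} preserves highest-density-first, so the algorithm in the general setting is literally the one defined at the beginning of Section~\ref{sec:general}, not a transformed version of it.
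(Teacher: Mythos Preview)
Your proposal is correct and follows exactly the paper's approach: the paper's proof of this theorem is a single sentence saying to combine Lemma~\ref{lem:mainunit} with the reduction in Lemma~\ref{lem:unit}, and you have simply made the parameter bookkeeping explicit (your choice $\epsilon''=\epsilon/3$, $\epsilon'=\frac{\epsilon/3}{1+\epsilon/3}$ is one valid way to realize the split the paper leaves implicit).
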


\subsection{Completion Times} \label{sec:completion}
We now claim that Theorem~\ref{thm:mainarb} implies there is a $O(1)$-competitive for the total (weighted) completion time objective function, even without any speedup.  To see this, we argue that we can simulate speed-up $s$ for the total completion time objective by losing a factor $s$ in the competitive ratio.  Given any online schedule $A$ using $s$-speed, construct a online schedule $B$ using $1$-speed as follows.  Each job scheduled with $s$ speed at time $t$ in $A$ is scheduled during the interval $(st, s(t+1)]$ in $B$.  This ensures a job $i$ completed at time $c(i)$ in $A$ is completed at time $s \cdot c(i)$ in $B$.  Thus, each job pays an extra factor of at most $s$ in the completion time, so this extra factor goes directly into the competitive ratio.   

More formally, we prove the following (where we make no attempt to optimize the constant).
\begin{theorem} \label{thm:completion}
There is a  $O(1)$-competitive for the total completion time objective when jobs have arbitrary sizes and weights and the degree bounds are arbitrary.
\end{theorem}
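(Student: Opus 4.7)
The plan is the one sketched in the paragraph preceding the theorem: apply Theorem~\ref{thm:mainarb} at constant speedup to obtain an $O(1)$-competitive schedule $A$ for weighted flow time, simulate $A$ by a $1$-speed schedule $B$ that is slower by only a factor of $O(s)$, and convert the flow-time guarantee into a completion-time guarantee. Let $F^\star$ denote the optimal weighted flow time and $C^\star$ the optimal weighted completion time. The conversion rests on two trivial inequalities: $F^\star \leq C^\star$ (per job, since $c(i)-r_i \leq c(i)$) and $\sum_i w_i r_i \leq C^\star$ (any feasible schedule has $c(i) \geq r_i$).

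Concretely, I would fix $\epsilon = 1$ in Theorem~\ref{thm:mainarb} so that HDF with speed $s = 3$ produces a schedule $A$ whose total weighted flow time is at most $c \cdot F^\star$ for an absolute constant $c$. To simulate $A$ by a $1$-speed schedule $B$, let $S_t$ denote the multiset of jobs processed at time $t$ in $A$, viewed as a multigraph on $V$ with $\deg_v(S_t) \leq s \cdot d_v$ for every $v$. The central step is to edge-partition $S_t$ into $s+1$ sub-schedules $S_t^0, S_t^1, \ldots, S_t^s$, each satisfying the original degree bounds $d_v$. Given such a partition, $B$ plays $S_t^j$ at time $(s+1)t + j$; this is a valid $1$-speed schedule that respects release times, and every job satisfies $c_B(i) \leq (s+1)\, c_A(i) + s \leq (2s+1)\, c_A(i)$ (using the mild convention $c_A(i) \geq 1$, which can be enforced by a trivial time shift).

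The edge-partition is the one place where I would take some care, though it is a standard argument and is the main (and essentially only) obstacle. I would split each vertex $v$ into $d_v$ copies and distribute the (at most $s\, d_v$) edges incident to $v$ among these copies so that each copy receives at most $s$ edges, which is always possible by pigeonhole. The resulting multigraph has maximum degree at most $s$, so by Vizing's theorem it admits a proper edge coloring with at most $s+1$ colors; each color class uses at most one edge per copy of $v$, hence at most $d_v$ edges at $v$ in the original graph, giving the desired partition.

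Putting these pieces together,
\begin{align*}
\sum_i w_i c_B(i) &\leq (2s+1) \sum_i w_i c_A(i) \\
&= (2s+1) \Bigl( \sum_i w_i (c_A(i) - r_i) + \sum_i w_i r_i \Bigr) \\
&\leq (2s+1)\bigl( c \cdot F^\star + C^\star \bigr) \leq (2s+1)(c+1) \cdot C^\star,
\end{align*}
where the third line uses Theorem~\ref{thm:mainarb} for the first inner sum and $\sum_i w_i r_i \leq C^\star$ for the second, and the final step uses $F^\star \leq C^\star$. With $s = 3$ and $c = O(1)$ this is $O(1) \cdot C^\star$, proving the theorem.
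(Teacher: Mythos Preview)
Your approach is essentially the same as the paper's: run highest-density-first at constant speed via Theorem~\ref{thm:mainarb}, stretch the resulting schedule to unit speed, and convert the flow-time bound to a completion-time bound using $\sum_i w_i r_i \le C^\star$ and $F^\star \le C^\star$. You are in fact more explicit than the paper about the time-stretching step, which the paper simply asserts yields a factor-$s$ blowup without arguing why the speed-$s$ multigraph at each time step can be decomposed into feasible unit-speed steps.

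One small technical correction: after your vertex-splitting, the resulting graph is in general a \emph{multigraph} (e.g., if $d_{u}=d_{v}=1$ and the speed-$s$ schedule places several parallel $u$--$v$ edges, there is nothing to split), so Vizing's $\Delta+1$ bound for simple graphs does not apply. Use Shannon's bound $\chi'\le\lfloor 3\Delta/2\rfloor$ (or Vizing for multigraphs, $\chi'\le\Delta+\mu\le 2s$) instead; you then need $O(s)$ rather than $s+1$ sub-slots, and the $O(1)$ conclusion is unchanged.
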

\begin{proof}
Let $OPT_c$ denote the cost of the optimal schedule with respect to weighted completion times, and let $c^*(i)$ denote the completion time of job $i$ in this schedule.  Note that the total weighted flow time of this schedule is $F = \sum_i w_i(c^*(i) - r_i) = \sum_i w_i c^*(i) - \sum_i w_i r_i$.


Let $c_f(i)$ denote the completion time of job $i$ when we run highest-density-first with $3$-speed.   Then Theorem~\ref{thm:mainarb} implies that $\sum_i w_i(c_f(i) - r_i) \leq O(1) \cdot F = O(1) \cdot (OPT_c - \sum_i w_i r_i)$.  Now by stretching out time as described earlier, we get a new schedule where job $i$ completes at time at most $c(i) \leq 3 \cdot c_f(i)$.  Putting this together, we get that

\begin{small}
\begin{align*}
    &\sum_i w_i c(i) \leq 3 \sum_i w_i  c_f(i) \\
    &= 3 \left(\sum_i w_i  c_f(i) - \sum_i w_i r_i\right) + 3 \sum_i w_i r_i \\
    &\leq 3\left( O(1) \cdot \left(OPT_c - \sum_i w_i r_i\right)\right) + 3 \sum_i w_i r_i\\
    &\leq O(1) \cdot OPT_c. \qedhere
\end{align*}
\end{small}
\end{proof}

Despite the wide variety of algorithms and analyses they provided for the sum of completion times, the existence of such an algorithm for this general setting \emph{was not} given in~\cite{JJGDG17}.  They did not give bounds on \emph{weighted} completion times in any setting, and even for unweighted completion times they did not provide a $O(1)$-competitive algorithm for arbitrary sizes, degree bounds, and release times.  Thus this shows that designing algorithms for flow times, even with speedup, can yield improvements for  completion times.

\section{Lower Bound} \label{sec:lower-bound}

In this section we will prove the following theorem.
\begin{theorem} \label{thm:LB-randomize}
Every randomized algorithm has expected competitive ratio at least $\Omega(\sqrt{n})$, where $n$ is the total number of jobs, even on instances in which all job sizes are $1$ and all degrees are $1$.
\end{theorem}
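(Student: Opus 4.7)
The plan is to apply Yao's minimax principle: we construct a distribution $\mathcal{D}$ over input instances (all with unit sizes and unit degree bounds, as required) and argue that for every \emph{deterministic} online algorithm $A$,
\[
  \mathbb{E}_{I \sim \mathcal{D}}\bigl[\mathrm{cost}_{A}(I)\bigr] \;\geq\; \Omega(\sqrt{n}) \cdot \mathbb{E}_{I \sim \mathcal{D}}\bigl[\mathrm{OPT}(I)\bigr].
\]
A randomized algorithm is a distribution over deterministic ones, so this at once lower-bounds the competitive ratio of every randomized algorithm against an oblivious adversary. The main source of hardness I want to exploit is the matching constraint: each vertex is a ``single-machine'' bottleneck, and once the algorithm commits to a processing order at a busy vertex, a later adversarial release can align badly with that choice.

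Setting $k = \Theta(\sqrt{n})$, the distribution samples a uniformly random permutation $\pi$ of $[k]$ and generates the following instance. At time $0$ it releases $k$ ``bait'' jobs $(u,v_i)$ for $i\in[k]$, forming a star at a central vertex $u$; because of the unit degree bound at $u$ these must be processed one per time step, so any algorithm implicitly commits (online) to an ordering $\sigma$ in which $(u,v_{\sigma(\tau)})$ is processed at time $\tau$. Then at each time $t\in\{1,\dots,k\}$ the adversary releases a ``payload'' star of $k$ jobs $\{(v_{\pi(t)}, w_{t,\ell})\}_{\ell\in[k]}$ centered at $v_{\pi(t)}$. Total jobs: $k + k^2 = \Theta(n)$.

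The comparison proceeds by contrasting how $\mathrm{OPT}$ and any deterministic $A$ align their bait-schedules with the (random) payload centers. Knowing $\pi$, $\mathrm{OPT}$ uses $\sigma_{\mathrm{OPT}} = \pi$, so $v_{\pi(t)}$ is freed up exactly as its payload star arrives, and each payload batch can be processed in an uninterrupted run at its own vertex (in parallel across $t$), while the bait jobs themselves are pipelined with payload work. For $A$, the order $\sigma_A$ is chosen without knowledge of $\pi$; the bait job $(u,v_{\pi(t)})$ is scheduled at a time $\tau_t = \sigma_A^{-1}(\pi(t))$ that, marginally over random $\pi$, is uniform on $[k]$ and almost always far from $t$. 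This mis-alignment is what I want to turn into an $\Omega(\sqrt{n})$ multiplicative loss, by showing that a constant fraction of payload batches are blocked in a way that adds $\Omega(k)$ flow time per batch, summed over $\Omega(k)$ batches for $\Omega(k^2)$ additive loss.

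The hard part will be making the \emph{ratio} (not merely the additive gap) come out to $\sqrt{n}$: in the naive analysis above, both $\mathrm{OPT}$ and $A$ pay $\Theta(k^3)=\Theta(n^{1.5})$ in aggregate and the loss appears only as a lower-order term. Two adjustments I plan to combine to get a genuine $\sqrt{n}$ separation: (i) thin out the payload releases so that in $\mathrm{OPT}$'s aligned schedule each payload job has flow time $O(1)$ (achieved by interleaving payload processing with bait work so that no batch ever waits), driving $\mathrm{OPT}$ down to $O(n)$; and (ii) amplify the mis-alignment penalty by designing the payload batches to cascade, so that once the algorithm misses on batch $t$, the blocking at $v_{\pi(t)}$ pushes work into time steps that collide with batches $t+1, t+2, \dots$, rather than being absorbed locally. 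Formalizing this cascade — i.e.\ proving that for a random permutation a constant fraction of the batches suffer $\Omega(k)$ blocking that cannot be amortized against $\mathrm{OPT}$'s own baseline — is the main technical obstacle, and will use a second-moment / anti-concentration style argument on the random assignment $t\mapsto \sigma_A^{-1}(\pi(t))$ against any fixed $\sigma_A$. Once both pieces are in place, summing over batches yields $\mathbb{E}[\mathrm{cost}_A] = \Omega(n^{1.5})$ against $\mathbb{E}[\mathrm{OPT}] = O(n)$, giving the claimed $\Omega(\sqrt{n})$ lower bound.
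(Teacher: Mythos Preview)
Your proposal has a genuine gap. The construction you describe does not yield an $\Omega(\sqrt{n})$ separation, and the two ``adjustments'' you sketch do not repair it. In the base construction, each payload batch consists of $k$ jobs sharing the vertex $v_{\pi(t)}$, so \emph{any} schedule (including OPT) already incurs $\Theta(k^2)$ flow time on that batch alone; summing over $k$ batches gives $\mathrm{OPT} = \Theta(k^3) = \Theta(n^{3/2})$, and the extra blocking from one misaligned bait job is a lower-order term --- exactly the problem you flag but do not solve. Thinning the payloads (adjustment (i)) so that batch $t$ releases one job per step over $[t, t+k-1]$ does drive OPT down to $\Theta(n)$, but then the online algorithm can simply postpone all $k$ bait jobs until after time $2k$, process every payload job with flow time $1$ as it arrives, and pay only $\Theta(k^2) = \Theta(n)$ extra for the delayed baits --- matching OPT up to constants. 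Adjustment (ii), the ``cascade,'' cannot occur as described: the payload batches sit at distinct vertices $v_{\pi(1)}, \dots, v_{\pi(k)}$ with disjoint leaf sets, so blocking at $v_{\pi(t)}$ has no effect whatsoever on batch $t+1$. There is no mechanism by which a miss on one batch propagates to another.

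The idea you are missing is to make a \emph{backlog persist at a single bottleneck vertex} by feeding a long stream into that same vertex. The paper achieves this with just four vertices and one coin flip: release $\sqrt{L}$ jobs on $(v_1,v_2)$ and $\sqrt{L}$ jobs on $(v_3,v_2)$ at time $1$; then with probability $1/2$ stream $L$ unit jobs one per step on $(v_3,v_4)$, and otherwise on $(v_1,v_4)$. Any deterministic algorithm leaves at least $\sqrt{L}/2$ jobs unfinished in one of the two initial groups at time $\sqrt{L}$; with probability $1/2$ the subsequent stream hits the \emph{same} vertex ($v_3$ or $v_1$) as that leftover group. From then on, one stream job arrives per step and at most one job at that vertex can be processed per step, so the backlog of $\sqrt{L}/2$ jobs never shrinks over the next $L$ steps, yielding $\Omega(L\sqrt{L}) = \Omega(n^{3/2})$ total flow time. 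OPT, knowing the coin, clears the correct group first and pays $O(n)$. Your construction decouples the early commitment (at $u$, or one edge per $v_i$) from the later load (at the various $v_{\pi(t)}$), and that decoupling is precisely why no large ratio materializes.
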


To prove this, we first apply Yao's principle~\cite{Yao}: it is sufficient to provide a distribution over inputs such that any deterministic algorithm has expected competitive ratio of at least $\Omega(\sqrt{n})$.  So consider the following distribution.  

Let $V = \{v_1, v_2, v_3, v_4\}$, and set all degree bounds to $1$.  Let $L$ be some large even value (eventually $n$ will be $\Theta(L)$).  Let $\mathcal S_1$ be a set of $\sqrt{L}$ identical jobs, each of the form $(v_1, v_2, 1, 1)$, and similarly let $\mathcal S_2$ be a set of $\sqrt{L}$ identical jobs each of the form $(v_3, v_2, 1, 1)$.  (Note that both of these sets consist of jobs which are released at time $1$).  Let $\mathcal S_3 = \{(v_3, v_4, 1, \sqrt{L} + i) :  i \in [L]\}$ (so one job released at each time in $[\sqrt{L} + 1, L + \sqrt{L}]$), and let $\mathcal S_4 = \{(v_1, v_4, 1, \sqrt{L} + i) :  i \in [L]\}$.  Our distribution of instances is the following: with probability $1/2$ the set of jobs is $\mathcal T_1 = \mathcal S_1 \cup \mathcal S_2 \cup \mathcal S_3$, and with probability $1/2$ the set of jobs is $\mathcal T_2 = \mathcal S_1 \cup \mathcal S_2 \cup \mathcal S_4$.  Note that in both cases, $n = L + 2\sqrt{L} = \Theta(L)$.

\begin{lemma} \label{lem:LB-OPT}
$OPT \leq O(n)$ with probability $1$
\end{lemma}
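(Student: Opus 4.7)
The plan is to exhibit, for each of the two possible realizations $\mathcal{T}_1$ and $\mathcal{T}_2$, an explicit feasible schedule whose total flow time is $O(n)$; since this bound then holds deterministically for every sample of the random input, it will give $OPT \leq O(n)$ with probability $1$. The key structural observation I will exploit is that the edge $\{v_1,v_2\}$ used by $\mathcal{S}_1$ is vertex-disjoint from the edge $\{v_3,v_4\}$ used by $\mathcal{S}_3$ (and analogously $\{v_3,v_2\}$ is vertex-disjoint from $\{v_1,v_4\}$), so those two job classes may be processed in parallel as a matching.

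For $\mathcal{T}_1 = \mathcal{S}_1 \cup \mathcal{S}_2 \cup \mathcal{S}_3$ I will describe the following schedule. During times $t \in \{1,\dots,\sqrt{L}\}$ process one job of $\mathcal{S}_2$ (edge $\{v_3,v_2\}$) per round; this is feasible because no $\mathcal{S}_3$ jobs have yet been released and $\mathcal{S}_1,\mathcal{S}_2$ collide at $v_2$. During times $t \in \{\sqrt{L}+1,\dots,2\sqrt{L}\}$ process one job of $\mathcal{S}_1$ (edge $\{v_1,v_2\}$) together with the $\mathcal{S}_3$ job released at time $t$ on edge $\{v_3,v_4\}$; these two edges share no vertex, so each round is a valid matching and the degree bounds are met. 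From time $2\sqrt{L}+1$ onward simply schedule the single $\mathcal{S}_3$ job released at that round, which is always feasible on $\{v_3,v_4\}$.

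I will then bound the flow times. The $\sqrt{L}$ jobs of $\mathcal{S}_2$ incur flow times $0,1,\dots,\sqrt{L}-1$, contributing $O(L)$. The $\sqrt{L}$ jobs of $\mathcal{S}_1$ incur flow times in $[\sqrt{L},2\sqrt{L}-1]$, contributing $O(L)$. Each job of $\mathcal{S}_3$ is completed in the round in which it is released and contributes $0$. Summing gives $O(L) = O(n)$. The case $\mathcal{T}_2 = \mathcal{S}_1 \cup \mathcal{S}_2 \cup \mathcal{S}_4$ is handled by the mirror-image schedule: process $\mathcal{S}_1$ first in rounds $1,\dots,\sqrt{L}$, then pair $\mathcal{S}_2$ with the arriving $\mathcal{S}_4$ jobs on the disjoint edges $\{v_3,v_2\}$ and $\{v_1,v_4\}$, and then schedule the remaining $\mathcal{S}_4$ jobs on arrival.

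There is no real obstacle here; the only subtle point is choosing the correct order in which to clear $\mathcal{S}_1,\mathcal{S}_2$, namely, completing first the set whose endpoint ($v_3$ in $\mathcal{T}_1$, $v_1$ in $\mathcal{T}_2$) will be needed by the soon-to-arrive stream, so that the subsequent rounds can pair the remaining jobs with the stream on a vertex-disjoint edge. Once this ordering is specified, the two schedules above are manifestly feasible and their per-class flow-time sums add to $O(n)$, establishing the lemma.
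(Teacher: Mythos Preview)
Your proof is correct and follows essentially the same approach as the paper: for each of the two instances you exhibit the same explicit schedule (clear first the set sharing a vertex with the incoming stream, then pair the other set with the stream on vertex-disjoint edges, then serve the stream as it arrives) and bound each class's contribution by $O(L)=O(n)$. The only cosmetic differences are in how flow times are tallied (exact sums versus per-job upper bounds, and whether a job served in its release round contributes $0$ or $1$), none of which affects the $O(n)$ conclusion.
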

\begin{proof}
If the actual instance is $\mathcal T_1$, then for every $t \in [\sqrt{L}]$, OPT could schedule a job in $\mathcal S_2$ (since they are all released at time $1$).  Then after time $\sqrt{L}$, all jobs from $\mathcal S_2$ have been completed.  Then for the next $\sqrt{L}$ rounds, OPT can schedule one job from $\mathcal S_1$ and one job from $\mathcal S_3$ simultaneously (since they do not share any endpoints, and one new job from $\mathcal S_3$ arrives in each round).  Then after round $2\sqrt{L}$ all jobs in $\mathcal S_1$ have been completed, so OPT will continue to schedule the jobs in $\mathcal S_3$ as they arrive.  In this schedule, every job in $\mathcal S_1$ has flow time at most $\sqrt{L}$, every job in $\mathcal S_2$ has flow time at most $2\sqrt{L}$, and every job in $\mathcal S_3$ has flow time $1$.  Thus $OPT \leq \sqrt{L} \cdot \sqrt{L} + \sqrt{L} \cdot 2\sqrt{L} + L = O(L)$.

Similarly, if the actual instance is $\mathcal T_2$, then for every $t \in [\sqrt{L}]$, OPT could schedule a job in $\mathcal S_1$.  Then after time $\sqrt{L}$, all jobs from $\mathcal S_1$ have been completed.  Then for the next $\sqrt{L}$ rounds, OPT can schedule one job from $\mathcal S_2$ and one job from $\mathcal S_4$ simultaneously.  Then after round $2\sqrt{L}$ all jobs in $\mathcal S_2$ have been completed, so OPT will continue to schedule the jobs in $\mathcal S_4$ as they arrive.  As in the $\mathcal T_1$ case, the total flow time achieved by OPT is at most $O(L) = O(n)$.
\end{proof}

Now we analyze an arbitrary deterministic online algorithm $\mathcal A$.  We begin with the following claim.
\begin{lemma} \label{lem:LB-ALG1}
With probability at least $1/2$, for all $t \in \{\sqrt{L}+1, \sqrt{L} + 2, \dots, L + \sqrt{L}\}$, there are at least $\sqrt{L}/2$ unfinished jobs at time $t$ that have already been released.
\end{lemma}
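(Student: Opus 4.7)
The plan is to analyze the fixed deterministic algorithm $\mathcal{A}$ against the random choice of instance. The key opening observation is that during time steps $1, \ldots, \sqrt{L}$ the two possible instances $\mathcal{T}_1$ and $\mathcal{T}_2$ are indistinguishable to $\mathcal{A}$, because $\mathcal{S}_3$ and $\mathcal{S}_4$ have not yet begun releasing. Thus $\mathcal{A}$'s behavior in this initial window is the same in both worlds. Let $a$ and $b$ denote the number of jobs from $\mathcal{S}_1$ and $\mathcal{S}_2$, respectively, that $\mathcal{A}$ completes by the end of time $\sqrt{L}$. Since every such job is incident to $v_2$ (whose degree bound is $1$), at most one can be processed per time step, so $a + b \leq \sqrt{L}$; in particular, $\min(a,b) \leq \sqrt{L}/2$.

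Assume without loss of generality that $a \leq \sqrt{L}/2$; the case $b \leq \sqrt{L}/2$ is symmetric under the relabeling $\mathcal{S}_1 \leftrightarrow \mathcal{S}_2$, $\mathcal{S}_3 \leftrightarrow \mathcal{S}_4$, $v_1 \leftrightarrow v_3$, $\mathcal{T}_1 \leftrightarrow \mathcal{T}_2$. I claim that whenever the random instance turns out to be $\mathcal{T}_2$, which occurs with probability $1/2$, the desired inequality holds for every $t \in \{\sqrt{L}+1, \ldots, L + \sqrt{L}\}$, which will finish the proof.

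Fix such a $t$, and let $f_j(t)$ denote the number of $\mathcal{S}_j$ jobs completed by $\mathcal{A}$ by time $t$ for $j \in \{1,2,4\}$. The number of jobs released by time $t$ is exactly $\sqrt{L} + \sqrt{L} + (t - \sqrt{L}) = t + \sqrt{L}$, so it suffices to show that the total number completed is at most $a + t$. Two structural facts suffice. First, in $\mathcal{T}_2$ both $\mathcal{S}_1$ and $\mathcal{S}_4$ jobs are incident to vertex $v_1$, whose degree bound is $1$. During times $1, \ldots, \sqrt{L}$ no $\mathcal{S}_4$ job has been released yet, so $v_1$ is used at most $a$ times in that window, and across the subsequent $t - \sqrt{L}$ time steps it is used at most $t - \sqrt{L}$ more times. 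Hence $f_1(t) + f_4(t) \leq a + (t - \sqrt{L})$. Second, $f_2(t) \leq |\mathcal{S}_2| = \sqrt{L}$ trivially by availability. Summing gives that the total number of jobs completed by time $t$ is at most $a + t$, so the number of released but unfinished jobs is at least $(t + \sqrt{L}) - (a + t) = \sqrt{L} - a \geq \sqrt{L}/2$, as required.

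The step I expect to be the main obstacle is identifying the correct combination of constraints in the final paragraph. The natural first attempt, using the $v_2$-constraint $f_1(t) + f_2(t) \leq t$ together with some bound involving $f_4(t)$, fails because $\mathcal{A}$ can ``waste'' the $v_2$-slot after time $\sqrt{L}$ and the resulting bound on the unfinished jobs drifts with $t$. The correct move is to pair the $v_1$-constraint, sharpened by the observation that $v_1$ was used only $a$ times during the first $\sqrt{L}$ steps (because $\mathcal{S}_4$ had not yet arrived), with the trivial availability bound $f_2(t) \leq \sqrt{L}$; together these cancel the $t$-dependence and leave the residual $\sqrt{L} - a$.
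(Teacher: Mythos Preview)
Your proof is correct and follows essentially the same approach as the paper: both observe that the algorithm's behavior through time $\sqrt{L}$ is instance-independent, that at most $\sqrt{L}$ jobs from $\mathcal{S}_1\cup\mathcal{S}_2$ can be completed by then (via the shared vertex $v_2$), and that the ``bad'' instance is the one whose incoming stream shares a vertex with whichever of $\mathcal{S}_1,\mathcal{S}_2$ is less finished. The only difference is packaging: the paper runs an induction on $t$ tracking only the subsystem $\mathcal{S}_1\cup\mathcal{S}_4$ (or $\mathcal{S}_2\cup\mathcal{S}_3$), using ``one job out via $v_1$, one job in from $\mathcal{S}_4$'' at each step, whereas you do a direct global count, pairing the sharpened $v_1$-bound $f_1(t)+f_4(t)\le a+(t-\sqrt{L})$ with the trivial $f_2(t)\le\sqrt{L}$. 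These are the same argument; your version just unrolls the induction.
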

\begin{proof}
Both of the possible instances are the same up until time $\sqrt{L}$, and by time $\sqrt{L}$, $\mathcal A$ has completed at most $\sqrt{L}$ jobs from $\mathcal S_1 \cup S_2$ (since they all share at least one endpoint).  This after time $\sqrt{L}$, either $\mathcal S_1$ or $\mathcal S_2$ still has at least $\sqrt{L} / 2$ unfinished jobs.  

If $\mathcal S_2$ still has at least $\sqrt{L} / 2$ unfinished jobs (case 1), then suppose that the instance is $\mathcal T_1$ (this happens with probability $1/2$).  We prove the lemma by induction on $t$.  When $t = \sqrt{L}+1$, we know that there are at least $\sqrt{L}/2$ jobs from $\mathcal S_2$ that have not yet been completed.  So the lemma is true for $t = \sqrt{L} + 1$.  Now consider some $\sqrt{L} + 1 < t \leq L + \sqrt{L}$.  By induction, at time $t-1$ there were at least $\sqrt{L}/2$ uncompleted jobs from $\mathcal S_2 \cup \mathcal S_3$ that had already been released.  At most one of these jobs was processed by $\mathcal A$ at time $t-1$ (since they all share $v_3$ as an endpoint), and at time $t$ one new job from $\mathcal S_3$ was released.  Thus the number of uncompleted jobs from $\mathcal S_2 \cup S_3$ at time $t$ is at least $\sqrt{L}/2 - 1  + 1 = \sqrt{L}$, as claimed.  

Now suppose that $\mathcal S_1$ still has at least $\sqrt{L} / 2$ unfinished jobs after time $\sqrt{L}$ (case 2).  Then with probability $1/2$ the instance is $\mathcal T_2$.  The same induction works here.  When $t = \sqrt{L}+1$, we know that there are at least $\sqrt{L}/2$ jobs from $\mathcal S_1$ that have not yet been completed, so the lemma is true for $t = \sqrt{L} + 1$.  Now consider some $\sqrt{L} + 1 < t \leq L + \sqrt{L}$.  By induction, at time $t-1$ there were at least $\sqrt{L}/2$ uncompleted jobs from $\mathcal S_1 \cup \mathcal S_4$ that had already been released.   At most one of these jobs was processed by $\mathcal A$ at time $t-1$ (since they all share $v_1$ as an endpoint), and at time $t$ one new job from $\mathcal S_4$ was released.  Thus the number of uncompleted jobs from $\mathcal S_1 \cup S_4$ at time $t$ is at least $\sqrt{L}/2 - 1  + 1 = \sqrt{L}$, as claimed. \end{proof}
\begin{lemma} \label{lem:LB-A}
The expected sum of flow times in $\mathcal A$ is at least $\Omega(n^{3/2})$
\end{lemma}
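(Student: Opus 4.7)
The plan is to combine Lemma~\ref{lem:LB-ALG1} with the standard identity that the total flow time of a schedule equals $\sum_{t} |\mathcal S_{\mathcal A}(t)|$, where $\mathcal S_{\mathcal A}(t)$ is the set of jobs that have been released but not yet completed by $\mathcal A$ at time $t$. This identity holds because each job $i$ contributes $1$ to $|\mathcal S_{\mathcal A}(t)|$ for exactly the $c(i) - r_i$ values of $t$ between $r_i$ and $c(i) - 1$ (or $c(i)$, depending on indexing convention), which is precisely its flow time.

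Given this identity, the proof reduces to lower bounding $|\mathcal S_{\mathcal A}(t)|$ over a large enough window. First I would apply Lemma~\ref{lem:LB-ALG1}, which guarantees that with probability at least $1/2$ over the random instance, for every one of the $L$ time steps $t \in \{\sqrt{L}+1, \sqrt{L}+2, \dots, L + \sqrt{L}\}$ we have $|\mathcal S_{\mathcal A}(t)| \geq \sqrt{L}/2$. Summing over just these $L$ time steps already yields
\[
\sum_t |\mathcal S_{\mathcal A}(t)| \;\geq\; L \cdot \frac{\sqrt{L}}{2} \;=\; \Omega(L^{3/2})
\]
conditional on the good event.

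Finally, I would take expectations: since the bound holds with probability at least $1/2$ and the sum of flow times is nonnegative in every outcome, the expected total flow time of $\mathcal A$ is at least $\tfrac{1}{2} \cdot \Omega(L^{3/2}) = \Omega(L^{3/2})$. Combined with $n = L + 2\sqrt{L} = \Theta(L)$, this gives $\Omega(n^{3/2})$, as claimed. Combined with Lemma~\ref{lem:LB-OPT} and Yao's principle, Theorem~\ref{thm:LB-randomize} follows immediately.

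There is essentially no obstacle here: all the hard work of showing that $\mathcal A$ cannot drain its queue fast enough under at least one of the two input branches was done in Lemma~\ref{lem:LB-ALG1}. The only thing to be careful about is the bookkeeping around which jobs count toward $|\mathcal S_{\mathcal A}(t)|$ (released and uncompleted) and making sure the $L$ time steps in the summation window match the range on which Lemma~\ref{lem:LB-ALG1} gives the $\sqrt{L}/2$ lower bound; everything else is a direct substitution.
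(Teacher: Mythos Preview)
Your proposal is correct and follows essentially the same argument as the paper: both use the identity that total flow time equals $\sum_t R(t)$ (your $|\mathcal S_{\mathcal A}(t)|$), restrict the sum to the window $\{\sqrt{L}+1,\dots,L+\sqrt{L}\}$, invoke Lemma~\ref{lem:LB-ALG1} to get $\sqrt{L}/2$ per step with probability at least $1/2$, and convert via $n=\Theta(L)$. You are actually slightly more explicit than the paper in spelling out the expectation step using nonnegativity.
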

\begin{proof} For every job $i$, let $c(i)$ denote its completion time in $\mathcal A$.  For every time, let $R(t)$ denote the number of jobs that have been released but not yet completed by $\mathcal A$.  Let $\mathcal S = \mathcal T_1$ if $T_1$ is the instance, and otherwise let $\mathcal S = \mathcal T_2$.  Then Lemma~\ref{lem:LB-ALG1} implies that with probability at least $1/2$,
\begin{small}
\begin{align*}
    \sum_{i \in \mathcal S} (c(i) - r_i) &= \sum_{t} R(t) \geq \sum_{t=\sqrt{L}+1}^{L + \sqrt{L}} R(t) \geq \sum_{t=\sqrt{L}+1}^{L + \sqrt{L}} \frac{\sqrt{L}}{2} \\
    &\geq \Omega(L^{3/2}) = \Omega(n^{3/2}). \qedhere
\end{align*}
\end{small}
\end{proof}

Lemmas~\ref{lem:LB-OPT} and~\ref{lem:LB-A}, together with Yao's principle~\cite{Yao}, imply Theorem~\ref{thm:LB-randomize}.


\bibliographystyle{plain}
\bibliography{refs}

\end{document}